\newcommand{\Real}{\mathbf{R}}
\newcommand{\Integ}{\mathbb{Z}}
\DeclareMathOperator{\argmin}{arg\,min}
\DeclareMathOperator{\argmax}{arg\,max}
\newtheorem{theorem}{Theorem}
\newtheorem{lemma}{Lemma}
\newtheorem{definition}{Definition}
\newtheorem{proposition}{Proposition}
\newtheorem{remark}{Remark}
\newcommand{\SPAN}{\mathop{\operator@font span}}
\newcommand{\supp}{\mathop{\operator@font supp}}
\begin{document}

\title{Robust Linear Regression Analysis - A Greedy Approach}

\author{George~Papageorgiou\footnote{geopapag@di.uoa.gr} , Pantelis~Bouboulis\footnote{panbouboulis@gmail.com} and~Sergios~Theodoridis\footnote{stheodor@di.uoa.gr}
}
\maketitle

\begin{abstract}
The task of robust linear estimation in the presence of outliers is of particular importance in signal processing, statistics and machine learning. Although the problem has been stated a few decades ago and solved using classical (considered nowadays) methods, recently it has attracted more attention in the context of sparse modeling, where several notable contributions have been made. In the present manuscript, a new approach is considered in the framework of greedy algorithms. The noise is split into two components: a) the inlier bounded noise and b) the outliers, which are explicitly modeled by employing sparsity arguments. Based on this scheme, a novel efficient algorithm (Greedy Algorithm for Robust Denoising - GARD), is derived. GARD alternates between a least square optimization criterion and an Orthogonal Matching Pursuit (OMP) selection step that identifies the outliers. The case where only outliers are present has been studied separately, where bounds on the \textit{Restricted Isometry Property} guarantee that the recovery of the signal via GARD is exact. Moreover, theoretical results concerning convergence as well as the derivation of error bounds in the case of additional bounded noise are discussed. Finally, we provide extensive simulations, which demonstrate the comparative advantages of the new technique.
\end{abstract}


\section{Introduction}
\label{Intro}
%
%
%
%
The notion of {\em robustness}, i.e., the efficiency of a method to solve a learning task from data, which have been contaminated by large values of noise, has occupied the scientific community for over half a century \cite{dixon1950analysis, grubbs1969procedures}. Regardless the type of the problem, e.g., classification or regression, the goal is to identify observations that have been hit by large values of noise, known as {\it outliers}, and be removed from the training data set. Over the years, many authors have tried to state definitions to identify a data point as an outlier. A few typical characterizations follow:
\begin{itemize}
\item ``An outlier is an observation that deviates so much from other observations as to arouse suspicions that is was generated by a different mechanism" (Hawkins, 1980) \cite{hawkins1980identification}.
\item ``An outlier is an observation which appears to be inconsistent with the remainder of the data set" (Barnet and Lewis, 1994) \cite{barnett1994outliers}.
\item ``An outlier is an observation that lies outside the overall pattern of a distribution" (Moore and McCabe, 1999) \cite{moore1989introduction}.
\item ``An outlier in a set of data is an observation or a point that is considerably dissimilar or inconsistent with the remainder of the data" (Ramasmawy, 2000) \cite{ramaswamy2000efficient}.
\item ``Outliers are those data records that do not follow any pattern in an application" (Chen, 2002) \cite{tang2002enhancing, chen2003modeling, tang2007capabilities}.
\end{itemize}

In this paper, we focus on solutions of the linear regression problem in the presence of outliers. In such tasks, classic estimators, e.g., the Least-Squares, are known to fail \cite{huber19721972}. This problem has been addressed since the 1950's, in \cite{dixon1950analysis, grubbs1969procedures} and actually solved more than two decades later, in \cite{huber19721972, rousseeuw1990unmasking, leroy1987robust}, leading to the development of a new field in Statistics, known as \textit{Robust Statistics}.


The variety of methods that have been developed to handle outliers can be classified into two major directions. The first one includes methods that rely on the use of \textit{diagnostic tools}, where one tries first to delete the outliers and then to fit the ``good" data by Least-Squares. The second direction, i.e., \textit{robust analysis}, includes methods that firstly fit the data, using a rough approximation, and then exploit the original estimation to identify the outliers as those points which possess large residuals. Both approaches have a long history. Methods developed under the Robust statistics framework, consist of combinatorial optimization algorithms like Hampel's Least Median of Squares Regression (LMedS) \cite{rousseeuw2005robust} (p.16), Fischler's and Bolles's Random Sample Consensus (RANSAC) \cite{fischler1981random}, as well as Rousseeuw's Least Trimmed Squares (LTS) \cite{maronna2006robust, rousseeuw2005robust}. Combinatorial optimization methods seemed to perform well at that time, although they were never adopted by the community. Nowadays, where the size of the training data set can be very large such techniques are prohibited. In contrast, the desire for lower complexity efficient algorithms has constantly been rising. One of the pioneering research works at that time, was the development of Huber's M-est \cite{huber1981wiley, maronna2006robust, huber19721972}, a method that belongs to the category of robust analysis. M-est provides good estimates, without a heavy computational cost, using robust functions of the residual norm (instead of the square function), in order to penalize large values of the residual.

The development of methods in the spirit of robust analysis, owes a lot to the emergence of \textit{sparse} optimization methods, during the past decade. Sparsity-aware learning and related optimization techniques have been at the forefront of the research in signal processing, encompassing a wide range of topics, such as compressed sensing, denoising and signal approximation techniques  \cite{chen1998atomic, candes2006stable, tibshirani1996regression, lustig2007sparse, candes2006robust,theodoridis2015machine_learning}.
There are two major paths, towards modeling sparse vectors/signals. The first one, focuses on minimizing the $\ell_0$ (pseudo)-norm of a vector which equals the number of non-zero coordinates of a vector (this is a non-convex function), whereas the second one employs the $\ell_1$ norm, the closest convex relaxation to the $\ell_0$ (pseudo)-norm to regularize the Least-Squares cost function. Both methods have been shown to generate sparse solutions.

The family of algorithms that have been developed to address problems involving the $\ell_0$ (pseudo)-norm, comprises \textit{greedy} algorithms, which have been shown to provide the solution of the related minimization task, under certain reasonable assumptions, \cite{tropp2007signal, davis1997adaptive, mallat2008wavelet, needell2010signal, needell2009cosamp}. Even though, in general, this is an NP-hard problem, it has been shown that such methods can efficiently recover the solution in polynomial time. On the other hand, the family of algorithms developed around the methods that employ the $\ell_1$ norm, embraces convex optimization, which provide a broader set of tools and stronger guarantees for convergence \cite{mallat2008wavelet, chen1998atomic, candes2005decoding, candes2006stable, tibshirani2013lasso,theodoridis2015machine_learning}.


In the present paper, the robust linear regression problem is approached via a celebrated greedy algorithm, the so called Orthogonal Matching Pursuit (OMP). The main idea is to split the noise into two separate components; one that corresponds to the inlier bounded noise and the other to the outliers. Since the outlier part is not present in all samples, sparse modeling arguments are mobilized to model the outlier noise component. This concept has been also employed in \cite{jin2010algorithms,mateos2012robust,mitra2010robust}. The novelty of our approach lies in the different modeling compared to already established works, by treating the task in terms of the $\ell_0$ minimization via greedy algorithmic concepts. A new algorithm has been derived which exhibits  notable performance gains, both in terms of computational resources as well as in terms of quality of the recovered results. Moreover, theoretical results concerning the power of the method to recover the outiers as well as performance error bound are derived.

The paper is organized as follows: In Section \ref{prel&relwork}, a brief overview of the various methods that address the robust linear regression task is given. Section \ref{sec:GARD} presents in details the proposed algorithmic scheme (GARD). The main theoretical results concerning GARD, including convergence, recovery of the support, recovery error, e.t.c. are included in section \ref{sec:theoretic}. To validate the proposed method, section \ref{sec:exper} includes several experiments, which compare GARD with other existing techniques. It is shown that GARD offers improved recovery at a reduced computational requirements. Finally, Section \ref{SEC:CONCL} contains some concluding remarks.

\textbf{Notations:} Throughout this work, capital letters are employed to denote sets, e.g., $S$, where $S^{c}$ and $|S|$ denote the complement and the cardinality of the $S$ respectively. The set of integer numbers between $1$ and $n$, i.e., $\{1,2,\dots,n\}$, will be denoted as $1\textrm{---}n$. Bold capital letters denote matrices, e.g., $\mathbf{X}$, while bold lowercase letters are reserved for vectors, e.g., $\boldsymbol\theta$. The symbol $\cdot^T$ denotes the transpose of the respective matrix/vector. The $i-$th column of  matrix $\mathbf{X}$ is denoted by $\boldsymbol{x}_i$ and the $i-$th element of vector $\boldsymbol\theta$ is denoted by $\theta_i$. The matrix $\mathbf{X}_S$ is the matrix $\mathbf{X}$ restricted over the set $S$, i.e., the matrix which
comprises of the columns of  $\mathbf{X}$, whose indices belong to the ordered index set $S=\{j_1< \dots< j_s\}$.
Moreover, the identity matrix of dimension $n$ will be denoted as $\mathbf{I}_n$, the zero matrix of dimension $n\times m$, as $\mathbf{O}_{n\times m},$ the vector of zero elements of appropriate dimension as $\mathbf{0}$ and columns of matrix $\mathbf{I}_n$ restricted over the set $S$, as $\mathbf{I}_{S}$.
If $\boldsymbol{v} \in \Real^n$ is an $s$-sparse vector over the support set $S\subset 1\textrm{---}n$, with $|S|=s$, then we denote as $[\boldsymbol{v}]_S\in \Real^s$, or $\boldsymbol{v}_S$ for sort,  the vector which contains only the $s$ non-zero entries of $\boldsymbol{v}$, i.e., $\boldsymbol{v}_S = \mathbf{I}_S^T\boldsymbol{v}$. For example, if $\boldsymbol{v}=(5,0,0,2,0)^T$, then $\boldsymbol{v}_{\{1,4\}} = (5,2)^T$. Moreover, one can easily show that $\mathbf{I}_S \boldsymbol{v}_S=\boldsymbol{v}$. Finally, we use a linear tensor defined as $F_{S'}(\boldsymbol{\alpha}) = \mathbf{I}_{S'}\mathbf{I}_{S'}^T\boldsymbol{\alpha}$, for any vector $\boldsymbol{\alpha}\in\Real^n$ over the index set $S'\subseteq 1\textrm{---}n$, which has identical coordinates with $\boldsymbol{\alpha}$ in all indices of $S'$ and zero everywhere else. Finally, it is obvious  that for the sparse vector $\boldsymbol{v}$ with support  set $S$, $F_S(\boldsymbol{v}) = \mathbf{I}_S\mathbf{I}_S^T\boldsymbol{v}=\boldsymbol{v}.$

\section{Preliminaries and related work}
\label{prel&relwork}

In a typical linear regression task, we are interested in estimating the linear relation between two variables, $\boldsymbol{x}\in\Real^m$ and $y\in\Real$, i.e., $y = \boldsymbol{x}^T\boldsymbol\theta$,  when several noisy instances, i.e., $\{(y_i, \boldsymbol{x}_i),\ i=1,...,n$\},  are known. In this context, we usually adopt the following (regression) modeling
\begin{equation}
y_i = \boldsymbol{x}_i^T\boldsymbol\theta_0+e_i,\ i=1,...,n,
\label{eq:probdef}
\end{equation}
where $e_i$ is some observation noise.
Hence, our goal is to estimate $\boldsymbol\theta_0\in \Real^m$ from the given training dataset of $n$ observations. In matrix notation, Eq. (\ref{eq:probdef}) can be rewritten as follows:
\begin{equation}
\boldsymbol{y} = \mathbf{X}\boldsymbol\theta_0+\boldsymbol{e},
\label{eq:probdefmat}
\end{equation}
where $\boldsymbol{y}=(y_1,y_2,...,y_n)^T,\ \boldsymbol{e}=(e_1,e_2,...,e_n)^T$ and  $\mathbf{X}=[\boldsymbol{x}_1,\boldsymbol{x}_2,...,\boldsymbol{x}_n]^T \in \Real^{n\times m}$.

As it is common in regression analysis, we consider that the number of observations exceeds the number of unknowns, i.e., $n>m$. In order to seek for a solution, we should assume that $\mathbf{X}$ is a full rank matrix, i.e. $rank(\mathbf{X})=m$.
If the noise is i.i.d Gaussian, the most common estimator, which is statistically optimal (BLUE), is the Least-Squares (LS) one. However, this is not the case in the presence of outliers or when the noise distribution exhibits long tails. In the following, we will give a brief overview of the algorithmic schemes that has been proposed to deal with the aforementioned problem. These schemes can be classified into two major categories, those that apply a weighted scheme to penalize the largest residuals and those that apply sparse modeling.

It should also be noted, that for the case where $n<m$ (underdetermined system of linear equations\footnote{An infinite number of solutions exist.}), an additional condition/constraint should be imposed, if one wishes to recover the unknown vector. One of the most common features lately, is to impose sparsity constraints on $\boldsymbol{\theta}_0$. However, in such a case, the task breaks down to a classical sparse model for the combined matrix $[ \mathbf{X}\ \boldsymbol{I} ]$, which has been extensively studied over the last years. Thus, any sparse related algorithm, e.g., ADMM for solving the LASSO formulation, OMP or any other greedy approach method, is rendered suitable for performing the estimation. Hence, the study of this case is considered as trivial.

\subsection{Penalizing Large Residuals}
\label{subsec:penlarres}
Both methods of this group, attempt to estimate $\boldsymbol\theta_0$, based on equation \eqref{eq:probdefmat}.
\begin{itemize}
\item \textbf{M-estimators (M-est) \cite{huber1981wiley}:}\\
  In M-est a robust cost function $\rho$ (satisfying certain properties) of the residual error
$r_i=y_i-\boldsymbol{x}_i^T\boldsymbol{\theta},\ i=1,2,...,n$,
is minimized, so that
$\boldsymbol{\theta}_*=\argmin_{\boldsymbol{\theta}} \sum_{i=1}^n \rho(r_i)$.
Differentiation with respect to $\boldsymbol\theta$ leads to $\sum_{i=1}^n \psi(r_i)\boldsymbol{x}_i^T= \mathbf{0}$, where $\psi=\rho'$.
If we define $w(r)=\psi(r)/r,\ w_i=w(r_i)$, then the system of normal equations is cast as $\sum_{i=1}^n w_i r_i\boldsymbol{x}_i^T= \mathbf{0}$. This is the basic version of M-est, although several variations exist. In our experiments, we have used a scaling parameter $\hat{\sigma}$, computed at each step, and defined $\rho:= \rho(r_i/\hat{\sigma})$. Consequently, another way to interpret M-est, is by solving a Weighted Least-Squares problem,
\begin{equation}
\min_{\boldsymbol{\theta}} \sum_{i=1}^n w_i r_i^2  \Leftrightarrow\min_{\boldsymbol{\theta}} ||\mathbf{W_{r}}^{1/2}(\boldsymbol{y}-\mathbf{X}\boldsymbol{\theta})||_2^2.
\label{eq:M-est}
\end{equation}
To solve (\ref{eq:M-est}) the {\em Iteratively reweighted least squares} (IRLS) algorithmic scheme is employed, where the diagonal weight matrix $\mathbf{W_{r}}$ assigns the weights, with values depending on the Robust function selected (Huber's, Tukey's biweight, Hampel's, Anrdews). Notice that if $\mathbf{W_{r}}=\mathbf{I}_n$, the scheme performs a classic Least-Squares. Many improved variants can be found in the literature. For more details, the interested reader is referred to \cite{huber1981wiley, rousseeuw2005robust, maronna2006robust}.

\item \textbf{Robust Orthogonal Matching Pursuit (ROMP) \cite{razavi2012robust}:}\\
The method is based on the M-est and the popular OMP algorithm, which was studied in \cite{davis1997adaptive, tropp2004greed, tropp2007signal}. However, the key aspect of the algorithm, which is also the feature that introduces robustness, is the execution of a weighted Least Squares step (M-est), instead of an ordinary one, each time the support set is augmented with an atom. Although a variety of termination criteria exist, we let the algorithm terminate, as soon as the length of the residual vector drops below a predefined threshold. The method is summarized as follows:
\textit{Initialization}: $k:=0$, $\Omega^{0}:= \emptyset $, $\boldsymbol\theta^{(0)}=\boldsymbol{0}$, $\boldsymbol{r}^{(0)}=\boldsymbol{y}$.\\ 
\textbf{Main iteration}\\
\textit{Step 1 - Initialization}:\\ $k:=k+1$, $\hat{\sigma}=MAD(\boldsymbol{r}^{(k-1)})$, $\boldsymbol{r}_{\psi}^{(k-1)}=\psi(\boldsymbol{r}^{(k-1)}/\hat{\sigma})$.\\
\textit{Step 2 - Atom selection}:\\ $i_k:= \argmax \left| \mathbf{X}^T\boldsymbol{r}_{\psi}^{(k-1)} \right| $, $\Omega^{k}:= \Omega^{k-1}\cup  i_k$.\\
\textit{Step 3 - Solution}:\\ 
\begin{equation}
\boldsymbol{\theta}^{(k)}:= \argmin_{\boldsymbol{\theta}} ||\mathbf{W_r}^{1/2}(\boldsymbol{y} - \mathbf{X}_{\Omega^k}\boldsymbol\theta)||_2^2,
\label{eq:ROMP}
\end{equation}
$\boldsymbol{r}^{(k)}= \boldsymbol{y} -\mathbf{X}_{\Omega^k}\boldsymbol\theta^{(k)}$.\\
The main procedure begins with the computation of $\hat{\sigma}=MAD(\boldsymbol{r})$\footnote{Median Absolute Deviation $MAD(\boldsymbol{x})=median_{i}(|x_i-median_{i}(x_i)|)$.} and the residual pseudo-values $\boldsymbol{r}_{\psi}$, which are then used for selecting the atom of matrix $\mathbf{X}$, that is most correlated to the residual pseudo-values (Step 2). Finally, it should be noted, that $\psi$ is a robust function (as in M-est), that also assigns the weights of matrix $\mathbf{W_r}$, required for solving the weighted Least Squares step, in \eqref{eq:ROMP}. Here, the difference to \eqref{eq:M-est}, is that at each $k$ step, $\mathbf{X}_{\Omega^k}$ includes only the columns of $\mathbf{X}$ that have been selected until the current step. Unfortunately, no theoretical justifications have been made, either on the selection of the atom based on the residual pseudo-values or on the iterative employment of the M-est.
\end{itemize}

\subsection{Sparse outlier modeling}
\label{subsec:sparsemod}
For all of the following methods, a different model is adopted. To this end, assume, that the outlier noise values are significantly fewer (i.e., sparse) compared to the size of the input data. Thus, a familiar technique, is to express the noise vector as a sum of two independent components, $\boldsymbol{e}=\boldsymbol{u}+\boldsymbol\eta$, where $\boldsymbol\eta$ is assumed to be the dense inlier noise vector of energy $\epsilon_0$ and $\boldsymbol{u}\in \Real^n$ the sparse outlier noise vector with support set, $T$, and cardinality $|T|\leq s<<n$. The support set is defined as the set of indices $i\in\{0,\dots,n\}$ that satisfy  $u_i\not=0$. Hence, equation \eqref{eq:probdefmat} can be recast as:
\begin{equation}
\boldsymbol{y} = \mathbf{X}\boldsymbol\theta_0+\boldsymbol{u}_0+ \boldsymbol\eta.
\label{eq:probdefmat2}
\end{equation}
As we would like to minimize the number of outliers in \eqref{eq:probdefmat2}, the associated optimization problem becomes:
\begin{equation}
\min_{\boldsymbol\theta,\boldsymbol{u}} ||\boldsymbol{u}||_0, \ \text{s.t.}\  ||\boldsymbol{y} - \mathbf{X}\boldsymbol\theta-\boldsymbol{u}||_2 \leq \epsilon_0.
\label{eq:probformL0}
\end{equation}

However, in general, the task in (\ref{eq:probformL0}) is  a combinatorial problem. Hence, many authors propose to relax the $\ell_0$ with the $\ell_1$ norm, using a similar formulation:
\begin{equation}
\min_{\boldsymbol\theta,\boldsymbol{u}} ||\boldsymbol{u}||_1, \ \text{s.t.}\  ||\boldsymbol{y} - \mathbf{X}\boldsymbol\theta-\boldsymbol{u}||_2 \leq \epsilon_0,
\label{eq:probformL1a}
\end{equation}
This has the advantage of transforming (\ref{eq:probformL0}) to a convex problem, which can be solved using a variety of methods.

\begin{itemize}
\item \textbf{LASSO formulation for robust denoising \cite{boyd2011alternating,boyd2011distributed,mateos2012robust}:}\\
The Alternating Direction Method of Multipliers (ADMM) is a technique for solving the Lagrangian form of (\ref{eq:probformL1a}), for appropriate  multiplier values $\lambda>0$ (Generalized Lasso form):
\begin{equation}
\boldsymbol{w}_*:= \argmin_{\boldsymbol{w}} \{ (1/2)||\boldsymbol{y} - \mathbf{A}\boldsymbol{w}||_2^2 + \lambda ||\mathbf{F}\boldsymbol{w} ||_1 \},
\label{eq:genlasso}
\end{equation}
where $\boldsymbol{w} =(\boldsymbol\theta, \; \boldsymbol{u})^T$, $\mathbf{A}=[\mathbf{X}\ \mathbf{I}_n ],$ $\mathbf{F}= [\mathbf{O}_{n\times m}\  \mathbf{I}_n]$ (for $\mathbf{F}=\mathbf{I}_{n+m},$ we have the standard Lasso form). The ADMM method was studied in the 70's and 80's, as a good alternative to penalty methods, although it was established as a method to solve partial differential equations, \cite{peaceman1955numerical, douglas1955numerical}. 


\item \textbf{Second Order Cone Programming (SOCP):}\\
Problem \eqref{eq:probformL1a} is also known as \textit{Robust Regression Basis Pursuit}-(BPRR) and it can be reformulated as a Second Order Cone Programming (SOCP) task \cite{lobo1998applications, boyd2004convex}:
\begin{align}
\begin{matrix}\label{eq:SOCP}
  & \boldsymbol{w}_* := \argmin_{\boldsymbol{w}} \boldsymbol{g}^T\boldsymbol{w}, \\
\text{s.t.}\ & \mathbf{H}^T\boldsymbol{w} \geq \mathbf{0},\ \boldsymbol{y}- \mathbf{R}\boldsymbol{w} \in \mathcal{C}_{\epsilon_0}^{n+1}
\end{matrix}
\end{align}
where $\boldsymbol{g}=(\boldsymbol{0},\; \boldsymbol{0},\; \boldsymbol{1})^T\in \Real^{m+2n}$,
$\boldsymbol{w} = (\boldsymbol\theta,\; \boldsymbol{u},\; \boldsymbol{s})^T$,
\begin{align*}
\mathbf{H}=\begin{bmatrix}
\mathbf{O}_{m\times n} & \mathbf{O}_{m\times n} \\
-\mathbf{I}_n & \mathbf{I}_n\\
\mathbf{I}_n & \mathbf{I}_n\\
\end{bmatrix},\;
\mathbf{R}= [\mathbf{X}\  \mathbf{I}_n\ \mathbf{O}_{n\times n}]
\end{align*}
and $\mathcal{C}_{\epsilon_0}^{n+1}$ is the unit second order (convex) cone of dimension $n+1$.

\item \textbf{Sparse Bayesian Learning (SBL) \cite{tipping2001sparse, wipf2004sparse, jin2010algorithms,theodoridis2015machine_learning}:}\\
Another path that has been exploited in the respective literature is to use Sparse Bayesian Learning techniques \cite{mitra2010robust, jin2010algorithms}. Assume that $u_i$ is a random variable with prior distribution $u_i\sim \mathcal{N}(0, \gamma_i)$,
where $\gamma_i$ is the hyperparameter that controls the variance of each $u_i$ that has to be learnt. If $\gamma_i = 0$, then $u_i=0$, i.e., no outlier exists on this index. In contrast, a positive value of $\gamma_i$, results in an outlier in the measurement $i$.
To estimate the regression coefficients, we jointly find
\begin{equation}
(\boldsymbol\theta_*, \boldsymbol\gamma_*, \sigma_*^2) =  \argmax_{\boldsymbol\theta,\boldsymbol\gamma,\sigma^2} P(\boldsymbol{y}| \mathbf{X},\boldsymbol\theta,\boldsymbol\gamma,\sigma^2 ),
\label{eq:Bayes1}
\end{equation}
where $\boldsymbol\gamma := (\gamma_1, \gamma_2,...,\gamma_n)^T$ and $\eta_i \thicksim  \mathcal{N}(0,\sigma^2)$. The posterior estimation of $\boldsymbol{u}$, follows, from:
\begin{equation}
\boldsymbol{u}_* = E[\boldsymbol{u}| \mathbf{X},\boldsymbol\theta_*, \boldsymbol\gamma_*, \sigma_*^2 ].
\label{eq:Bayes2}
\end{equation}
%
\end{itemize}


\section{Greedy Algorithm for Robust Denoising (GARD)}
\label{sec:GARD}

The goal of the proposed algorithmic scheme is to solve problem \eqref{eq:probformL0} using the split noise model described in \eqref{eq:probdefmat2} and it is  designed along the celebrated Orthogonal Matching Pursuit rationale. It should be noted that ROMP, which also employs OMP's selection technique, is quite different from our approach. ROMP is mainly based on the M-est algorithm, while the proposed scheme, in contrast to other methods, tackles directly problem \eqref{eq:probformL0} and alternates between a least squares minimization task and an OMP selection technique. It can be easily seen that \eqref{eq:probformL0} can also be cast as:
\begin{equation}
\min_{\boldsymbol\theta,\boldsymbol{u}} ||\boldsymbol{u}||_0, \ \text{s.t.}\  \Big\| \boldsymbol{y} - \mathbf{A}\begin{pmatrix}
\boldsymbol\theta\\
\boldsymbol{u}
\end{pmatrix}\Big\|_2 \leq \epsilon_0,
\label{eq:probformL0tog}
\end{equation}
where $\mathbf{A}=[\mathbf{X}\ \mathbf{I}_n]$.
Following OMP's rationale, at each iteration step, GARD estimates the solution, i.e., $\boldsymbol{z}_{*}^{(k)}=(\boldsymbol{\theta}_{*}^{(k)} , \boldsymbol{u}_{*}^{(k)})^T \in \Real^{m+k}$ (for step $k=0$ no outlier estimates exist and $\boldsymbol{z}_{*}^{(0)} \in \Real^m $), using a Least-Squares criterion (i.e., $\min_{\boldsymbol{\theta}, \boldsymbol{u}}\|\boldsymbol{y} - \mathbf{A}(\boldsymbol{\theta} , \boldsymbol{u})^T\|^2$). In the following iterations GARD selects the observation which is the furthest away from the solution, using OMP's selection rationale (based on correlation).
Hence, GARD restricts the selection over atoms of the second half of matrix $\mathbf{A}$, i.e., matrix $\mathbf{I}_n=[\boldsymbol{e}_1\ \boldsymbol{e}_2\ ...\ \boldsymbol{e}_n]$, where $\boldsymbol{e}_i$ are the vectors of the standard basis of $\Real^n$.

To be more specific, at the first step the algorithm computes the initial Least-Squares solution disregarding the presence of outliers, i.e., $\boldsymbol{\theta}_* = \argmin_{\boldsymbol{\theta}}\|\boldsymbol{y} - \mathbf{X}\boldsymbol{\theta}\|^2$, and the initial residual $\boldsymbol{r}^{(0)} = \boldsymbol{y} - \mathbf{X}\boldsymbol{\theta}_*$. Moreover, the set of the so called {\em active columns} is initialized to include all the columns of $\mathbf{X}$. Then, the main iteration cycle begins. At each step, GARD can be divided into two parts:
\begin{itemize}
\item Firstly, the greedy selection step is performed, i.e., the column vector from matrix $\mathbf{I}_n$ that is more correlated with the latest residual is selected and the set of active columns of $\mathbf{A}$ (i.e., the set of columns that have already been selected) is augmented by that column. The correlation is measured with respect to the angle, which in turn leads to the maximization of $\left| \langle \boldsymbol{r}^{(k)},\boldsymbol{e}_i \rangle \right|= \left| r_{i}^{(k)} \right|$ for an index $i \in J=1\textrm{---}n$.
\item Next, a Least-Squares solution step is performed and the new residual is computed.
\end{itemize}
This procedure is repeated until the residual drops below a specific predefined threshold, as described in details in Algorithm \ref{algo:GARD}. In order to avoid any confusion, we should also emphasize that even though both the sets $J$ and $S_{inac}$ correspond to the same orthonormal vectors $\boldsymbol{e}_i$ of matrix $\mathbf{I}_n$, they should not be regarded as equal, since 
$J$ includes indices from $\mathbf{I}_n$, whereas $S_{inac}$ includes indices from the second half of the augmented matrix $\mathbf{A}$. Consequently, since $\boldsymbol{r}^{(k-1)} \in \Real^n$ and the index selected, i.e., $j_k \in S_{inac}$, exceeds the dimensions of $\mathbf{I}_n$, the index $j-m$ is used for the elements of $\boldsymbol{r}^{(k-1)}$ in order to include indices that belong to the set $J$. For instance, if the largest component of $|r^{(k-1)}|$ is the $5$-th one, this leads to the fact that $j_k -m=5 \in J$ and $j_k=m+5 \in S_{inac}$.
As it will be shown, the improved performance of the proposed scheme is due to the orthogonality between the columns of $\mathbf{I}_n$ (standard Euclidean basis).
\begin{algorithm}
\caption{: Greedy Algorithm for Robust Denoising (GARD)} \label{algo:GARD}
\begin{algorithmic}
\STATE{\textbf{Input}}: $\mathbf{X},\ \boldsymbol{y},\ \epsilon_0$
\STATE{\textbf{Output}}: $\boldsymbol{z}_*=(\boldsymbol\theta_*,\boldsymbol{u}_*)^T$
 \STATE{Initialization}: $k:=0$\\
	$S_{ac}=\{1,2,...,m\}$, $S_{inac}=\{m+1,...,m+n\}$\\
	$\mathbf{A}_{ac}^{(0)}=\mathbf{X}$\\
  \STATE{\textbf{Solution*}}:
	$\boldsymbol{z}_*^{(0)} := \argmin_{\boldsymbol{z}} ||\boldsymbol{y} - \mathbf{A}_{ac}^{(0)}\boldsymbol{z}||_{2}^2$ \\
	\STATE{Initial Residual}:  $\boldsymbol{r}^{(0)}=\boldsymbol{y} - \mathbf{A}_{ac}^{(0)}\boldsymbol{z}_*^{(0)}$\\
 \WHILE {$||\boldsymbol{r}^{(k)} ||_2 > \epsilon_0$}
 \STATE $k:=k+1$\\
\STATE{Selection}: $j_k:=\argmax_{ j \in S_{inac}} |r_{j-m}^{(k-1)}|$\\
\STATE{Update Support}: $S_{ac}:=S_{ac}\cup \{j_k\},\ S_{inac}=S_{ac}^c,\ \mathbf{A}_{ac}^{(k)}= [\mathbf{A}_{ac}^{(k-1)}\ \boldsymbol{e}_{j_k}]$\\
\STATE{\textbf{Update Solution**}}: $\boldsymbol{z}_*^{(k)}:= \argmin_{\boldsymbol{z}} ||\boldsymbol{y} - \mathbf{A}_{ac}^{(k)}\boldsymbol{z}||_{2}^2$\\
\STATE{Update Residual}: $\boldsymbol{r}^{(k)}=\boldsymbol{y} - \mathbf{A}_{ac}^{(k)}\boldsymbol{z}_*^{(k)}$
\ENDWHILE
\end{algorithmic}
\end{algorithm}
The complexity of the algorithm is $O \big( (m+k)^3/3+n(m+k)^2 \big)$ at each $k$ step, making it unattractive when the dimension of the unknown vector is large. However, since at each step the method solves a standard Least-Squares problem, the complexity could be further reduced using \textit{Cholesky} decomposition, \textit{QR} factorization or the \textit{matrix inversion Lemma}. For details on those implementations of the classic OMP, read \cite{sturm2012comparison}. Playing with all schemes, we found that in our case the most efficient implementation was the Cholesky decomposition, as described below:
\begin{itemize}
\item \textit{Replace} the initial solution step $k:=0$, of Algorithm \ref{algo:GARD}, with:
\begin{algorithmic}
\STATE{\textbf{Solution*}}:\\
	\STATE{Factorization step}: $\mathbf{W}_{0}=\mathbf{X}^{T} \mathbf{X}$ \text{as}
	$\mathbf{W}_{0}=\mathbf{L}_{0} \mathbf{L}_{0}^{T}$.\\
  \STATE{Solve}
	$\mathbf{L}_{0}\mathbf{L}_{0}^{T}\boldsymbol{z}=\mathbf{X}^{T}\boldsymbol{y}$ \text{using}:\\
	\begin{itemize}
	\item \text{forward substitution} $\mathbf{L}_{0}\boldsymbol{q}=\mathbf{X}^{T}\boldsymbol{y}$
	\item \text{backward substitution} $\mathbf{L}_{0}^{T}\boldsymbol{z}_*^{(0)}=\boldsymbol{q}$.
	\end{itemize}
 \end{algorithmic}

\item \textit{Replace} the update solution step $k:=k+1$, of Algorithm \ref{algo:GARD}, with:
\begin{algorithmic}
\STATE{\textbf{Update Solution**}}:\\
\STATE{Compute} $\boldsymbol{v}$ \text{such that}: $\mathbf{L}_{k-1}\boldsymbol{v}=\mathbf{A}_{ac}^{(k-1)^T}\boldsymbol{e}_{j_k}$
\STATE{Compute}: $b=\sqrt{1-||\boldsymbol{v}||_2^2}$
\STATE{Matrix Update}: $ \mathbf{L}_{k}=\begin{pmatrix}
\mathbf{L}_{k-1} & \mathbf{0}\\ \boldsymbol{v}^T & b
\end{pmatrix} $
 \STATE{Solve}
	$\mathbf{L}_{k}\mathbf{L}_{k}^{T}\boldsymbol{z}=\mathbf{A}_{ac}^{(k)^T}\boldsymbol{y}$ \text{using}:\\
	\begin{itemize}
	\item \text{forward substitution} $\mathbf{L}_{k}\boldsymbol{p}=\mathbf{A}_{ac}^{(k)^T}\boldsymbol{y}$
	\item \text{backward substitution} $\mathbf{L}_{k}^{T}\boldsymbol{z}_*^{(k)}=\boldsymbol{p}$.
	\end{itemize}
\end{algorithmic}
\end{itemize}

This modification leads to a reduction of the cost by an order of magnitude, for the main iteration steps. Analytically, the cost at the initial factorization, plus the cost for the forward and backward substitution, is $O(m^3/3+nm^2)$. At each next step, neither inversion nor factorization is required. The lower triangular matrix $\mathbf{L}_{k}$ is updated, only with a minimal cost of square-dependence. Furthermore, the cost required for solving the linear system using forward and backward substitution at each next $k$ step is $O((m+k)^2+2n(m+k))$.  Thus, the \textit{total complexity} of the efficient  (via the Cholesky decomposition) GARD implementation is $O\big(m^3/3 + k^3/2 +(n+3k)m^2 + 3kmn \big)$.

\begin{remark}
The algorithm begins with a Least-Squares solution to obtain $\boldsymbol{z}_*^{(0)}$. Thus, if no outliers exist, GARD solves the standard Least-Squares problem; it provides the maximum likelihood estimator (MLE), when the noise is Gaussian.
\end{remark}
\begin{remark}
Since GARD solves a Least-Squares problem at each step, the new residual, $\mathbf{r}^{(k)}$, is orthogonal to each column that participates in the representation, i.e., $\langle \mathbf{r}^{(k)}, \boldsymbol{e}_{j_k} \rangle = r_{j_k}^{(k)} = 0,\ \forall\  k=1,2,\dotsc $. Thus, column  $\boldsymbol{e}_{j_k}$ of matrix $\mathbf{I}_n$, cannot be reselected.
\end{remark}
\begin{remark}
Considering the complexity of the efficient implementation of GARD, the algorithm speeds up in cases where the fraction of the outliers is very low, i.e., the outlier vector is very sparse $(k<<n)$.
\end{remark}
\begin{remark}
Matrix $\mathbf{A}_{ac}^{(k)}=[\mathbf{A}_{ac}^{(k-1)} \boldsymbol{e}_{j_k}]$ could also be cast as
$\mathbf{A}_{ac}^{(k)}= \begin{bmatrix}
 \mathbf{X}\  \mathbf{I}_{S_k}
\end{bmatrix}$,
where $S_k=\{j_1,j_2,...,j_k \}$, is the set of columns selected at the current step, i.e., the support set of our sparse estimate. Thus, the estimated support should not be confused with the set of active columns $S_{ac}$ that participate in the representation of $\boldsymbol{y}$.
\end{remark}
\begin{remark}The proposed scheme should not be confused with other OMP-based schemes, such as Robust OMP in \cite{razavi2012robust}; although both are OMP-based, they perform in a distinctive manner and for dissimilar purposes. As both the selection step, as well as the minimization step work quite different, GARD selects a column, based on the residual and performs a classic Least Squares procedure, whereas ROMP selects a column based on the residual pseudo-values and then solves a weighted Least Squares minimization problem.
\end{remark}

\section{Theoretical results}
\label{sec:theoretic}
This section is devoted to study the main properties of GARD. Firstly, the convergence properties of the proposed scheme are derived. In the sequel, it is shown that GARD can recover the exact solution, under certain assumptions, in the presence of outlier noise only. Finally, for the case of both inlier and outlier noise, bounds for the recovery of the sparse outlier support, as well as the reconstruction error are presented.

\subsection{General results}
\label{subsec:general_res}
\begin{lemma}
At every $k\leq n-m$ step, GARD selects a column vector $\mathbf{e}_{j_k}$ from matrix $\mathbf{I}_n$, that is linearly independent of all the column vectors in matrix $\mathbf{A}_{ac}^{(k-1)}$. Hence, $\mathbf{A}_{ac}^{(k)}$ has full rank and the solution to the Least-Squares problem at each step is unique.
\label{lem:fullrank}
\end{lemma}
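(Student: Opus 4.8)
The plan is to argue by induction on the step counter $k$, leaning on the fact that the least-squares residual is the orthogonal projection of $\boldsymbol{y}$ onto the orthogonal complement of the column space of the active matrix. Concretely, at every step the residual $\boldsymbol{r}^{(k-1)} = \boldsymbol{y} - \mathbf{A}_{ac}^{(k-1)}\boldsymbol{z}_*^{(k-1)}$ satisfies $\langle \boldsymbol{r}^{(k-1)}, \boldsymbol{a}\rangle = 0$ for every column $\boldsymbol{a}$ of $\mathbf{A}_{ac}^{(k-1)}$, equivalently $\boldsymbol{r}^{(k-1)} \perp \SPAN(\mathbf{A}_{ac}^{(k-1)})$; this is exactly the orthogonality recorded in Remark 2, and it holds irrespective of uniqueness. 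For the base case $k=0$, $\mathbf{A}_{ac}^{(0)} = \mathbf{X}$ has full column rank $m$ by the standing assumption $\mathrm{rank}(\mathbf{X}) = m$, so the initial least-squares solution is unique.

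For the inductive step, suppose $\mathbf{A}_{ac}^{(k-1)}$ has full column rank $m+k-1$ and that the while-loop actually executes step $k$; the loop guard then gives $\|\boldsymbol{r}^{(k-1)}\|_2 > \epsilon_0 \geq 0$, hence $\boldsymbol{r}^{(k-1)} \neq \boldsymbol{0}$. Writing $i_k := j_k - m$ for the index in $1\textrm{---}n$ of the selected standard-basis vector $\boldsymbol{e}_{i_k} \in \Real^n$, the first task is to show that the selected coordinate is nonzero, i.e. $r^{(k-1)}_{i_k} \neq 0$. By the orthogonality above, the residual vanishes at every previously selected coordinate, $r^{(k-1)}_{i_\ell} = 0$ for $\ell < k$; since $\boldsymbol{r}^{(k-1)}$ is nonzero, at least one of its nonzero coordinates must therefore sit at a not-yet-selected index, so the maximum $\max_{j \in S_{inac}} |r^{(k-1)}_{j-m}|$, attained at $j_k$, is strictly positive. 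Hence $r^{(k-1)}_{i_k} = \langle \boldsymbol{r}^{(k-1)}, \boldsymbol{e}_{i_k}\rangle \neq 0$.

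The second task is to convert this nonzero correlation into linear independence. Suppose, toward a contradiction, that $\boldsymbol{e}_{i_k} \in \SPAN(\mathbf{A}_{ac}^{(k-1)})$. Then, because $\boldsymbol{r}^{(k-1)}$ is orthogonal to the \emph{entire} column space, $\langle \boldsymbol{r}^{(k-1)}, \boldsymbol{e}_{i_k}\rangle = 0$, contradicting $r^{(k-1)}_{i_k} \neq 0$. Therefore $\boldsymbol{e}_{i_k}$ lies outside $\SPAN(\mathbf{A}_{ac}^{(k-1)})$, i.e. it is linearly independent of all columns of $\mathbf{A}_{ac}^{(k-1)}$; crucially this rules out dependence on the columns of $\mathbf{X}$ as well as on the earlier basis vectors, precisely because orthogonality is to the whole range rather than to the chosen $\boldsymbol{e}_{i_\ell}$ alone. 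Consequently $\mathbf{A}_{ac}^{(k)} = [\mathbf{A}_{ac}^{(k-1)}\ \boldsymbol{e}_{j_k}]$ has $m+k$ linearly independent columns; as long as $m+k \leq n$, i.e. $k \leq n-m$, this is full column rank, so the Gram matrix $\big(\mathbf{A}_{ac}^{(k)}\big)^{T}\mathbf{A}_{ac}^{(k)}$ is invertible and the least-squares solution $\boldsymbol{z}_*^{(k)}$ is unique. The bound $k \leq n-m$ is exactly the regime in which full column rank in $\Real^n$ is possible, and it is self-enforcing: once $n-m$ independent columns have been appended, $\SPAN(\mathbf{A}_{ac}^{(n-m)}) = \Real^n$ forces $\boldsymbol{r}^{(n-m)} = \boldsymbol{0}$ and the loop terminates.

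I expect the main obstacle to be the first task — certifying $r^{(k-1)}_{i_k} \neq 0$ — since this is the only place where the dynamics of the algorithm (the loop guard $\|\boldsymbol{r}^{(k-1)}\|_2 > \epsilon_0$ together with Remark 2's orthogonality) genuinely enter; the remainder is a clean projection argument. A secondary point requiring care is the index bookkeeping between $j_k \in S_{inac}$ and the coordinate $i_k = j_k - m$ of $\boldsymbol{r}^{(k-1)} \in \Real^n$, together with the observation that orthogonality to the full column space, not merely to the previously chosen basis vectors, is what secures independence from the $\mathbf{X}$ block.
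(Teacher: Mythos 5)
Your proof is correct and follows essentially the same route as the paper's: induction on $k$, with the two key observations that the selected residual coordinate $r^{(k-1)}_{j_k}$ is nonzero and that linear dependence of $\boldsymbol{e}_{j_k}$ on the active columns would then yield a contradiction. The only difference is cosmetic --- you invoke orthogonality of the least-squares residual to the whole column space of $\mathbf{A}_{ac}^{(k-1)}$ directly (and in doing so spell out more carefully why the selected coordinate is nonzero), whereas the paper exhibits a perturbed solution $\tilde{\boldsymbol{z}}^{(k-1)}$ achieving a strictly smaller residual norm; these are equivalent characterizations of the least-squares minimizer.
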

%

\begin{proof}
The proof relies on mathematical induction. At the initial step, matrix $\mathbf{A}_{ac}^{(0)}=\mathbf{X}$ has been assumed to be full rank, hence the solution of the Least-Squares problem is unique. Suppose that at $k-1$ step ($k \in \mathbf{N}^{*}$), matrix $\mathbf{A}_{ac}^{(k-1)}$ is full rank, hence let $\boldsymbol{z}_*^{(k-1)}$ denote the unique solution of the Least-Squares problem and $\boldsymbol{r}^{(k-1)}=\boldsymbol{y}-\mathbf{A}_{ac}^{(k-1)}\boldsymbol{z}_*^{(k-1)}$ the residual at the current step. Assume that at the $k-$th step, the $j_k-$th column of matrix $\mathbf{I}_n$ is selected from the set $S_{inac}$. We will prove that the columns of the augmented matrix at this step, i.e., the columns of matrix $\mathbf{A}_{ac}^{(k)}=[\mathbf{A}_{ac}^{(k-1)}\ \boldsymbol{e}_{j_k}]$, are linearly independent.  Since $j_k:=\argmax_{j \in S_{inac}} |r_{j}^{(k-1)}|$, we have that $r_{j_k}^{(k-1)}\neq 0$ (otherwise either this wouldn't have been selected or the residual vector would be equal to zero). Suppose that, the columns of matrix $\mathbf{A}_{ac}^{(k)}$ are linearly dependent, i.e., there exists $\boldsymbol{a}\neq \mathbf{0}$, such that $\boldsymbol{e}_{j_k}=\mathbf{A}_{ac}^{(k-1)}\boldsymbol{a}$ and let $\tilde{\boldsymbol{z}}^{(k-1)}=\boldsymbol{z}_*^{(k-1)}+r_{j_k}^{(k-1)}\boldsymbol{a}$.
Thus, we have
\begin{align*}
||\tilde{\boldsymbol{r}}^{(k-1)} ||_2 &= || \boldsymbol{y}-\mathbf{A}_{ac}^{(k-1)}\tilde{\boldsymbol{z}}^{(k-1)}||_2 = \\
&= || \boldsymbol{y}-\mathbf{A}_{ac}^{(k-1)}\boldsymbol{z}_*^{(k-1)} - r_{j_k}^{(k-1)} \mathbf{A}_{ac}^{(k-1)} \boldsymbol{a}  ||_2 = \\
&= ||\boldsymbol{r}^{(k-1)} - r_{j_k}^{(k-1)}\boldsymbol{e}_{j_k} ||_2
< ||\boldsymbol{r}^{(k-1)} ||_2,
\label{fullrank}
\end{align*}
which contradicts the fact that the residual of the Least-Squares solution attains the smallest norm. Thus, all the selected columns of matrix $\mathbf{A}_{ac}^{(k)}$  are linearly independent.
\end{proof}
\begin{theorem}
The norm of the residual vector \[\boldsymbol{r}^{(k)} = \boldsymbol{y} - \mathbf{A}_{ac}^{(k)}\boldsymbol{z}_*^{(k)}\] in GARD is strictly decreasing. Moreover, the algorithm
will always converge.
\label{theor:errordec}
\end{theorem}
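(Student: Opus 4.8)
The plan is to prove the two assertions separately: strict monotonicity of $\|\boldsymbol{r}^{(k)}\|_2$ first, and then finite termination, which is what ``convergence'' amounts to here.

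For the strict decrease, the key observation is that the column spaces are nested, $\SPAN(\mathbf{A}_{ac}^{(k-1)}) \subseteq \SPAN(\mathbf{A}_{ac}^{(k)})$, since $\mathbf{A}_{ac}^{(k)}$ merely appends the column $\boldsymbol{e}_{j_k}$. Because $\boldsymbol{r}^{(k)}$ is the minimum-norm least-squares residual over the larger subspace, it suffices to exhibit one feasible point in that subspace whose residual is strictly smaller than $\|\boldsymbol{r}^{(k-1)}\|_2$. I would take the candidate that retains the previous coefficients $\boldsymbol{z}_*^{(k-1)}$ and assigns the value $r_{j_k}^{(k-1)}$ to the newly added column. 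A direct expansion, using $\langle \boldsymbol{r}^{(k-1)}, \boldsymbol{e}_{j_k}\rangle = r_{j_k}^{(k-1)}$ and $\|\boldsymbol{e}_{j_k}\|_2 = 1$, gives
\[
\|\boldsymbol{r}^{(k-1)} - r_{j_k}^{(k-1)}\boldsymbol{e}_{j_k}\|_2^2 = \|\boldsymbol{r}^{(k-1)}\|_2^2 - \big(r_{j_k}^{(k-1)}\big)^2.
\]
Since the while-loop has not terminated, $\|\boldsymbol{r}^{(k-1)}\|_2 > \epsilon_0 \ge 0$, so the residual is nonzero and its largest-magnitude entry satisfies $r_{j_k}^{(k-1)} \ne 0$ (exactly as argued in Lemma~\ref{lem:fullrank}). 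Optimality of $\boldsymbol{z}_*^{(k)}$ then yields $\|\boldsymbol{r}^{(k)}\|_2^2 \le \|\boldsymbol{r}^{(k-1)}\|_2^2 - \big(r_{j_k}^{(k-1)}\big)^2 < \|\boldsymbol{r}^{(k-1)}\|_2^2$, which is the claimed strict decrease.

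For convergence, I would invoke Lemma~\ref{lem:fullrank} to control the dimension. After $k$ iterations $\mathbf{A}_{ac}^{(k)}$ is an $n\times(m+k)$ matrix of full column rank, so as long as $k \le n-m$ the selection and least-squares steps are well defined. At $k = n-m$ the matrix $\mathbf{A}_{ac}^{(n-m)}$ is square, $n \times n$, and full rank, hence invertible; its range is all of $\Real^n$ and the least-squares residual is forced to be $\boldsymbol{r}^{(n-m)} = \mathbf{0}$. Since $\|\mathbf{0}\|_2 = 0 \le \epsilon_0$, the stopping criterion must have been met at or before this step, so the algorithm terminates in at most $n-m$ iterations. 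Combined with the strict decrease, this establishes both parts of the statement.

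I do not anticipate a serious obstacle; the one point requiring care is recognizing that ``convergence'' should be read as finite termination rather than as convergence of an infinite sequence, and then tying the termination bound to the rank saturation guaranteed by Lemma~\ref{lem:fullrank}. The monotonicity argument is essentially the contradiction computation already performed in the proof of Lemma~\ref{lem:fullrank}, reused here as a genuine descent estimate rather than for deriving a contradiction.
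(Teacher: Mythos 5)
Your proof is correct and follows essentially the same route as the paper: the strict decrease is obtained by evaluating the new least-squares objective at the candidate $(\boldsymbol{z}_*^{(k-1)}, r_{j_k}^{(k-1)})^T$ and invoking optimality, and termination follows because the residual is forced to zero once $n-m$ columns of $\mathbf{I}_n$ have been added. Your explicit Pythagorean identity $\|\boldsymbol{r}^{(k-1)}\|_2^2 - (r_{j_k}^{(k-1)})^2$ is a slightly sharper statement of the same inequality the paper writes as a strict bound.
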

\begin{proof}
Let $\boldsymbol{z}_*^{(k-1)}$ be the unique Least-Squares solution (Lemma \ref{lem:fullrank}) and $\boldsymbol{r}^{(k-1)} = \boldsymbol{y} - \mathbf{A}_{ac}^{(k-1)}\boldsymbol{z}_*^{(k-1)}$ the respective residual at $k-1$ step. At the next step, the algorithm selects the column $j_k$ and augments matrix $\mathbf{A}_{ac}^{(k-1)}$, by column $\boldsymbol{e}_{j_k}$ to form matrix $\mathbf{A}_{ac}^{(k)}$. Let $\boldsymbol{z}_*^{(k)}$ denote the unique solution of the Least-Squares problem at the $k-$th step (Lemma \ref{lem:fullrank}) and $\boldsymbol{r}^{(k)} = \boldsymbol{y} - \mathbf{A}_{ac}^{(k)}\boldsymbol{z}_*^{(k)}$ the respective residual. Consequently, one could define a cost function  for every $\boldsymbol{z} \in \Real^{m+k}$ at $k$ step, as
$P^{(k)}(\boldsymbol{z})=|| \boldsymbol{y} - \mathbf{A}_{ac}^{(k)}\boldsymbol{z}||_2. $
 Thus, we have that
\begin{equation}
||\boldsymbol{r}^{(k)}||_2 = P^{(k)}(\boldsymbol{z}_*^{(k)}) \leq P^{(k)}(\boldsymbol{z}),
\label{eq:proof1}
\end{equation}
for every $\boldsymbol{z} \in \Real^{m+k}$. Now let $\mathbf{\mathtt{z}}^{(k)}=(\boldsymbol{z}_*^{(k-1)},r_{j_k}^{(k-1)})^T$, where $r_{j_k}^{(k-1)}$ is the $j_k$ coordinate of the residual $\boldsymbol{r}^{(k-1)}$. Thus, we have that
\begin{align}
P^{(k)}(\mathbf{\mathtt{z}}^{(k)}) &= || \boldsymbol{y} - \mathbf{A}_{ac}^{(k)}\mathbf{\mathtt{z}}^{(k)} ||_2 \nonumber \\ &= || \boldsymbol{y} - \mathbf{A}_{ac}^{(k-1)}\boldsymbol{z}_*^{(k-1)} - r_{j_k}^{(k-1)}\boldsymbol{e}_{j_k}||_2 \nonumber \\
&= ||\boldsymbol{r}^{(k-1)} - r_{j_k}^{(k-1)}\boldsymbol{e}_{j_k} ||_2 < ||\boldsymbol{r}^{(k-1)}  ||_2.
\label{eq:proof2a}
\end{align}
Combining \eqref{eq:proof1} and \eqref{eq:proof2a}, we have that
\begin{equation}
||\mathbf{r}^{(k)}  ||_2 < ||\mathbf{r}^{(k-1)}  ||_2.
\label{eq:proof2b}
\end{equation}
Since $\boldsymbol{y} \in \Real^n$, the residual equals zero, as soon as $n-m$ columns have been selected. However, since the noise bound is a positive value assumed to be known, the algorithm terminates at the first step $k<n-m$, where the residual's norm drops below $\epsilon_0$.
\end{proof}

\subsection{The presence of outliers only}
\label{subsec:onlyoutliers}
The scenario where the signal is corrupted only by outliers is treated separately. In this case, we aim to solve the following $\ell_0$ minimization problem:
\begin{align}
\begin{matrix}
\min_{\boldsymbol\theta, \boldsymbol{u}}  &  ||\boldsymbol{u}||_{0}\\
\text{s. t.} &  \boldsymbol{y} = \mathbf{X}\boldsymbol\theta+\boldsymbol{u},
\end{matrix}
\label{eq:probdefmat_only_out}
\end{align}
where $\mathbf{X}$ is assumed to be a full column rank matrix (note that if $\mathbf{X}$ has linearly dependent columns, there is not a unique solution for this problem). The general solution  of \eqref{eq:probdefmat_only_out} is an NP-hard task. However, under specific conditions, the problem can be solved efficiently using GARD, as it will be proved subsequently.

To simplify notation and reduce the size of the subsequent proofs, we  orthonormalize $\mathbf{X}$ by the reduced $\textit{QR}$ decomposition, i.e., $\mathbf{X}=\mathbf{QR}$, where $\mathbf{Q}$ is a $n \times m $ matrix, whose columns form an orthonormal basis of the column space of $\mathbf{X}$ (i.e., $\SPAN(\mathbf{X})$) and $\mathbf{R}$ is a $m \times m$ upper triangular matrix. Since $\mathbf{X}$ has full column rank, the decomposition is unique; moreover, matrix $\mathbf{R}$ is invertible. Using this decomposition, the split noise modeling described in equation \eqref{eq:probdefmat2} can be written as
\begin{equation}
\boldsymbol{y} = \mathbf{Q}\boldsymbol{w}_0+\boldsymbol{u}_0 +\boldsymbol\eta,
\label{eq:probdefmat_qr_noise}
\end{equation}
where $\boldsymbol{w}_0=\mathbf{R}\boldsymbol\theta_0.$ If $\boldsymbol{w}_0$ is recovered, the unknown vector $\boldsymbol\theta_0$ can also be recovered from $\boldsymbol\theta_0=\mathbf{R}^{-1}\boldsymbol{w}_0$. Equation \eqref{eq:probdefmat_qr_noise} plays a central role, as it describes the model that is adopted throughout this paper. In this section, however, we assume that only outlier noise exist, hence $\boldsymbol{\eta}$ is set to zero.

We are now in the position to express equation \eqref{eq:probdefmat_qr_noise} (for $\boldsymbol{\eta}=\boldsymbol{0}$) as 
$\boldsymbol{y} =  [\mathbf{Q}\ \mathbf{I}_n] \boldsymbol{z}'_0,$
where $\boldsymbol{z}'_0 =(\boldsymbol{w}_0^T ,\boldsymbol{u}_0^T)^{T}.$ Since the vector $\boldsymbol{u}_0$ is $s$-sparse at most, the measurement vector could also be written as $
\boldsymbol{y} = [\mathbf{Q}\ \mathbf{I}_{S}] \boldsymbol{z}_0,$ where $\boldsymbol{I}_S$ is the matrix containing column vectors from $\boldsymbol{I}_n$ indexed by\footnote{Recall that $\boldsymbol{I}_S=[\boldsymbol{e}_{j_1}, \dots \boldsymbol{e}_{j_{|S|}}]$, where $i_1<\dots<i_{|S|}$ are the indices of $S$.} $S$ and $\boldsymbol{z}_0=(\boldsymbol{w}_0^T, [\boldsymbol{u}_0]_S^T)^{T}$, with $[\boldsymbol{u}_0]_S \in \Real^{s}$ representing the reduced vector\footnote{Recall that $\boldsymbol{u}=\boldsymbol{I}_S\boldsymbol{u}_S$.}, that contains only the non-zero entries of $\boldsymbol{u}_0$. 

We assume that the outlier vector is sparse over the support subset $S\subset 1\textrm{---}n$, with $|S|= s<<n$ (i.e., $u_i=0$, for all $i\not\in S$ and $u_i\not=0$ for $i\in S$) and that $s<n/2$ (in the case where $s\geq n/2$, the solution cannot be recovered \cite{candes2005decoding}). Applying the \textit{QR} decomposition, problem \eqref{eq:probdefmat_only_out} could also be written as:
\begin{align}
\begin{matrix}
\min_{\boldsymbol{w}, \boldsymbol{u}}  &  ||\boldsymbol{u}||_{0}\\
\text{s. t.} &  \boldsymbol{y} = \mathbf{Q}\boldsymbol{w}+\boldsymbol{u},
\end{matrix}
\label{eq:probdefmat_only_out2}
\end{align}

In the following, the notion of the \textit{smallest principal angle} between subspaces is employed.  Given the information concerning the index subset $S$ (i.e., we assume that we know the support of the outliers),  $\boldsymbol{w}$ can be recovered, if and only if $ [\mathbf{Q}\ \mathbf{I}_S]$ has full rank. The latter assumption can also be expressed in terms of the \textit{smallest principal angle}, $\omega_S$, between the subspace spanned by the columns of the regressor matrix, i.e.,  $\SPAN(\mathbf{Q})$ and the subspace spanned by the columns of $\boldsymbol{I}_S$, i.e., $\SPAN(\mathbf{I}_S)$.

\begin{definition}\label{DEF:pa}
Let $\delta_{S}$ be the smallest number that satisfies the inequality $| \langle \boldsymbol{w},\boldsymbol{u} \rangle | \leq \delta_S ||\boldsymbol{w}||_2 ||\boldsymbol{u}||_2$, for all $\boldsymbol{w} \in \SPAN(\mathbf{Q})$ and $\boldsymbol{u} \in \SPAN(\mathbf{I}_S)$. Then $\omega_S=\arccos(\delta_S)$ is the smallest principle angle between the spaces $\SPAN(\mathbf{Q})$ and $\SPAN(\mathbf{I}_S)$.
\end{definition}
Generalizing Definition \ref{DEF:pa}, we can take
\begin{align}
\delta_s=\max\left\{\delta_S,\; \textrm{for all } S\in\left(\begin{matrix}1\textrm{---}n\\ k\end{matrix}\right), \; k\leq s\right\},\label{EQ:principal_angle_final}
\end{align}
where $1\textrm{---}n=\{1,\dots,n\}$ and $\left(\begin{matrix}1\textrm{---}n\\ k\end{matrix}\right)$ denotes the set of all possible $k$-combinations of $1\textrm{---}n$. Hence, we define the smallest principal angle between the regression subspace $\SPAN(\mathbf{Q})$ and all the at most $s$-dimensional outlier subspaces; i.e., the spaces $\SPAN(\mathbf{I}_S)$ for all possible combinations of $S$, such that $|S|\leq s$, as follows:
\begin{equation}
\omega_s = \arccos(\delta_s).
\label{eq:principalangle2}
\end{equation}
It can readily be seen that $\delta_s$ can be defined by employing only the value $k=s$ (instead of all $k\leq s$) and that
for any  $\boldsymbol{w} \in \SPAN(\mathbf{Q})$ and any at most $s-$sparse vector $\boldsymbol{u}$ we have that
\begin{equation}
| \langle \boldsymbol{w},\boldsymbol{u} \rangle | \leq \delta_{s} ||\boldsymbol{w}||_2 ||\boldsymbol{u}||_2.
\label{eq:principalangle3}
\end{equation}
\begin{remark}
The quantity $\delta_s \in [0,1]$ (or equivalently $\omega_s \in [0^{\circ},90^{\circ}]$) is  a measure of how well separated the regressor subspace is from all the $s$-dimensional outlier subspaces.
\end{remark}

\noindent The following condition, also well known as the Restricted Isometry Property (R.I.P.) and plays a central role in sparse optimization methods.
\begin{definition}
For orthonormal matrix $\mathbf{Q}$, we define a constant $\mu_s$, $s=1,2,...,N$, as the smallest number such that
\begin{equation}
(1-\mu_s)||\boldsymbol{\alpha}||_2^2 \leq ||[\mathbf{Q}\ \mathbf{I}_S]\boldsymbol{\alpha}||_2^2 \leq (1+\mu_s) || \boldsymbol{\alpha}||_2^2.
\label{eq:rip}
\end{equation}
\label{def:rip}
\end{definition}
\noindent In \cite{mitra2013analysis} (Lemma III.1), it has been proved that for orthonormal regressor matrix $\mathbf{Q}$, the smallest principal angle coincides with the R.I.P. constant defined, i.e., $\delta_s=\mu_s,\ s=1,2,...,n$. Finally, the following theorem (\cite{candes2005decoding,mitra2013analysis}), guarantees uniqueness of the decomposition.

\begin{theorem}\label{THE:unique}
Assume that the vector $\mathbf{y}\in\Real^n$ can be decomposed as follows:
\begin{equation}
\mathbf{y} = \mathbf{Q}\boldsymbol{w}_0 + \boldsymbol{u}_0,
\label{eq:uniq_dec}
\end{equation}
where $\boldsymbol{w}_0\in\Real^m$ and $\boldsymbol{u}_0$ is an at most $s-$sparse vector. If $\delta_{2s}<1$ then this decomposition is unique.
\end{theorem}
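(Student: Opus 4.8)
The plan is to argue by the standard sparse-recovery uniqueness route: assume two valid decompositions exist, show that their difference is annihilated by the combined matrix restricted to a support of size at most $2s$, and then invoke the strict positivity of the R.I.P. lower bound guaranteed by the hypothesis $\delta_{2s}<1$.

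First I would suppose, for contradiction, that $\mathbf{y}$ admits two decompositions of the prescribed form, $\mathbf{y}=\mathbf{Q}\boldsymbol{w}_0+\boldsymbol{u}_0=\mathbf{Q}\boldsymbol{w}_1+\boldsymbol{u}_1$, where both $\boldsymbol{u}_0$ and $\boldsymbol{u}_1$ are at most $s$-sparse. Setting $\boldsymbol{w}=\boldsymbol{w}_0-\boldsymbol{w}_1$ and $\boldsymbol{u}=\boldsymbol{u}_0-\boldsymbol{u}_1$, subtraction yields $\mathbf{Q}\boldsymbol{w}+\boldsymbol{u}=\mathbf{0}$. The crucial bookkeeping observation is that $\boldsymbol{u}$ is supported on $\supp(\boldsymbol{u}_0)\cup\supp(\boldsymbol{u}_1)$, hence is at most $2s$-sparse; I denote this support set by $S$, so that $|S|\leq 2s$. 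This is precisely why the relevant R.I.P. order is $2s$ and not $s$.

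Next I would recast the identity into the block form used in Definition \ref{def:rip}. Letting $\boldsymbol{\alpha}=(\boldsymbol{w}^T,[\boldsymbol{u}]_S^T)^T$ and recalling that $\mathbf{I}_S[\boldsymbol{u}]_S=\boldsymbol{u}$, we obtain $[\mathbf{Q}\ \mathbf{I}_S]\boldsymbol{\alpha}=\mathbf{Q}\boldsymbol{w}+\boldsymbol{u}=\mathbf{0}$. Applying the lower R.I.P. bound with the constant $\mu_{2s}$ then gives $(1-\mu_{2s})\|\boldsymbol{\alpha}\|_2^2\leq\|[\mathbf{Q}\ \mathbf{I}_S]\boldsymbol{\alpha}\|_2^2=0$.

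Finally, using the identification $\delta_s=\mu_s$ established in \cite{mitra2013analysis} (Lemma III.1), the hypothesis $\delta_{2s}<1$ translates into $1-\mu_{2s}>0$, which forces $\|\boldsymbol{\alpha}\|_2=0$, i.e. $\boldsymbol{w}=\mathbf{0}$ and $[\boldsymbol{u}]_S=\mathbf{0}$, whence $\boldsymbol{u}=\mathbf{0}$. Consequently $\boldsymbol{w}_0=\boldsymbol{w}_1$ and $\boldsymbol{u}_0=\boldsymbol{u}_1$, establishing uniqueness. I do not anticipate a genuine obstacle here; the only points requiring care are the observation that the difference of two $s$-sparse vectors is $2s$-sparse (so the correct R.I.P. order is invoked) and the fact that $\delta_{2s}<1$ is exactly the condition making the lower R.I.P. constant strictly positive, so that the vanishing-norm conclusion is valid.
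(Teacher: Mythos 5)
Your argument is correct: the paper itself does not prove Theorem \ref{THE:unique} but cites it from \cite{candes2005decoding,mitra2013analysis}, and your proof is precisely the standard R.I.P.-based uniqueness argument used there --- subtract the two decompositions, note the difference of two $s$-sparse vectors is $2s$-sparse, and apply the lower bound in \eqref{eq:rip} together with $\delta_{2s}=\mu_{2s}<1$ to force the difference vector to vanish. The two points you flag as needing care (the order $2s$ rather than $s$, and the strict positivity of $1-\mu_{2s}$) are indeed the only substantive steps, and you handle both correctly.
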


One of the main theoretical results, established in this work is the following theorem, which guarantees the recovery of the support of the sparse vector, which in turn leads to the recovery of the exact solution for the case only outliers exist.
\begin{theorem}
Let $\mathbf{X}$ be a full column rank matrix and assume that the measurement vector, $\boldsymbol{y} = \mathbf{X}\boldsymbol{\theta}_0+\boldsymbol{u}_0$, has a unique decomposition, such that $||\boldsymbol{u}_0 ||_0 \leq s$ (at most $s$ outliers exist in the $\boldsymbol{y}$ variable). If
\begin{equation}
\delta_{s}<\sqrt{\frac{\min \{|u_i|,\; u_i\not=0\}}{2||\boldsymbol{u}_0||_2}},
\label{eq:delta_bound_definition}
\end{equation}
where $u_i$ are the elements of $\boldsymbol{u}_0$, then GARD guarantees that the unknown vector $\boldsymbol{\theta}_0$ and the sparse outlier vector $\boldsymbol{u}_0$ are recovered without any error.
\label{theor:exact_rec}
\end{theorem}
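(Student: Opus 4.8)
The plan is to prove by induction on the iteration index that GARD selects, at every step, an index belonging to the true outlier support $S=\supp(\boldsymbol{u}_0)$. Once all $s$ indices of $S$ have been chosen, the active matrix is $[\mathbf{Q}\ \mathbf{I}_S]$, which has full rank (the hypothesis \eqref{eq:delta_bound_definition} forces $\delta_s^2<\tfrac12$, hence $\delta_s<1$, and by Definition \ref{def:rip} with $\delta_s=\mu_s$ this guarantees full rank on the support $S$); therefore the Least-Squares residual vanishes, the \textbf{while}-loop terminates, and by Lemma \ref{lem:fullrank} the unique solution reproduces $(\boldsymbol{w}_0,[\boldsymbol{u}_0]_S)$, so that $\boldsymbol{\theta}_0=\mathbf{R}^{-1}\boldsymbol{w}_0$ and $\boldsymbol{u}_0$ are recovered exactly. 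Before that moment the residual is strictly positive (Theorem \ref{theor:errordec}), so the loop runs exactly $s$ steps and collects all of $S$. I work throughout in the orthonormalized model \eqref{eq:probdefmat_qr_noise} with $\boldsymbol{\eta}=\boldsymbol{0}$.

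For the inductive step, assume that after $k$ steps the selected set $S_k$ satisfies $S_k\subseteq S$; put $S'=S\setminus S_k$ and let $\mathbf{P}_k$ be the orthogonal projector onto $\SPAN([\mathbf{Q}\ \mathbf{I}_{S_k}])$. Since $\mathbf{Q}\boldsymbol{w}_0$ and every $\boldsymbol{e}_i$, $i\in S_k$, lie in the range of $\mathbf{P}_k$, the residual collapses to $\boldsymbol{r}^{(k)}=(\mathbf{I}_n-\mathbf{P}_k)\boldsymbol{u}_0=(\mathbf{I}_n-\mathbf{P}_k)F_{S'}(\boldsymbol{u}_0)$. Writing $\mathbf{P}_k F_{S'}(\boldsymbol{u}_0)=\mathbf{Q}\boldsymbol{a}+\boldsymbol{c}$ with $\boldsymbol{c}$ supported on $S_k$, and using that $\boldsymbol{r}^{(k)}$ is orthogonal to every active column (optimality of the Least-Squares step), one finds $\boldsymbol{c}=-F_{S_k}(\mathbf{Q}\boldsymbol{a})$; thus the projection agrees with $\mathbf{Q}\boldsymbol{a}$ off $S_k$, and the residual entries are $r_i^{(k)}=u_i-(\mathbf{Q}\boldsymbol{a})_i$ for $i\in S'$ and $r_j^{(k)}=-(\mathbf{Q}\boldsymbol{a})_j$ for $j\in S^{c}$. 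Hence the selection picks a correct index precisely when $\max_{i\in S'}|r_i^{(k)}|>\max_{j\in S^{c}}|r_j^{(k)}|$.

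The two bounds come entirely from the principal-angle inequality \eqref{eq:principalangle3}. Applied to any single coordinate $\boldsymbol{e}_l$ (a $1$-sparse vector) and to $\mathbf{Q}\boldsymbol{a}\in\SPAN(\mathbf{Q})$ it gives $|(\mathbf{Q}\boldsymbol{a})_l|\le\delta_s\|\mathbf{Q}\boldsymbol{a}\|_2$, so that $\max_{j\in S^{c}}|r_j^{(k)}|\le\delta_s\|\mathbf{Q}\boldsymbol{a}\|_2$, while taking the index $i^{\star}\in S'$ with $|u_{i^{\star}}|=\max_{i\in S'}|u_i|\ge\min\{|u_i|:u_i\neq0\}$ yields $\max_{i\in S'}|r_i^{(k)}|\ge\min\{|u_i|:u_i\neq0\}-\delta_s\|\mathbf{Q}\boldsymbol{a}\|_2$. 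It remains to bound the contamination $\|\mathbf{Q}\boldsymbol{a}\|_2$: from $\mathbf{Q}^{T}\boldsymbol{r}^{(k)}=\mathbf{0}$ one obtains the fixed-point relation $\mathbf{Q}\boldsymbol{a}=\mathbf{Q}\mathbf{Q}^{T}\big(F_{S'}(\boldsymbol{u}_0)+F_{S_k}(\mathbf{Q}\boldsymbol{a})\big)$, whose argument is supported on $S'\cup S_k=S$ and hence $s$-sparse; since $\|\mathbf{Q}\mathbf{Q}^{T}\boldsymbol{v}\|_2\le\delta_s\|\boldsymbol{v}\|_2$ for every $s$-sparse $\boldsymbol{v}$ (again \eqref{eq:principalangle3}), this gives the clean bound $\|\mathbf{Q}\boldsymbol{a}\|_2\le\delta_s\|\boldsymbol{u}_0\|_2$ at the first step, where $S_k=\emptyset$. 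Substituting it, the selection is correct as soon as $\min\{|u_i|:u_i\neq0\}-\delta_s^2\|\boldsymbol{u}_0\|_2>\delta_s^2\|\boldsymbol{u}_0\|_2$, which is exactly hypothesis \eqref{eq:delta_bound_definition}.

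The main obstacle is the \emph{uniform} control of $\mathbf{Q}\boldsymbol{a}$ across all iterations $k\ge1$: once $S_k\neq\emptyset$ the orthogonality conditions couple $\mathbf{Q}\boldsymbol{a}$ to itself through the feedback term $F_{S_k}(\mathbf{Q}\boldsymbol{a})$, and the naive estimate degrades the clean constant to $\|\mathbf{Q}\boldsymbol{a}\|_2\le\delta_s(1-\delta_s^2)^{-1/2}\|\boldsymbol{u}_0\|_2$, which would require a marginally stronger condition than \eqref{eq:delta_bound_definition}. Closing this gap so that the single bound \eqref{eq:delta_bound_definition} suffices at \emph{every} step is the delicate point; I would attempt it either by absorbing the feedback term using the monotone decrease of $\|\boldsymbol{r}^{(k)}\|_2$ from Theorem \ref{theor:errordec} (which keeps the effective energy below $\|\boldsymbol{u}_0\|_2$), or by re-deriving the contamination bound in terms of $\|F_{S'}(\boldsymbol{u}_0)\|_2$ and exploiting that the already-recovered coordinates no longer feed into the residual.
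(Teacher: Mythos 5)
Your plan coincides in outline with the paper's proof (Appendix~\ref{appendix:A}): induction on the claim that every selection comes from $S$, the residual written as the projection of an $S$-supported vector off $\SPAN(\mathbf{Q})$, entrywise lower/upper bounds via the principal angle, and a base case that yields exactly \eqref{eq:delta_bound_definition}. However, the step you flag as ``the delicate point'' is precisely the content of the induction, and your proposal does not close it: your contraction estimate $\|\mathbf{Q}\boldsymbol{a}\|_2\le\delta_s(1-\delta_s^2)^{-1/2}\|\boldsymbol{u}_0\|_2$ forces a strictly stronger hypothesis than \eqref{eq:delta_bound_definition}, and neither of your suggested repairs is carried out. (The first one is doubtful as stated: the monotone decrease of $\|\boldsymbol{r}^{(k)}\|_2$ controls the residual, not the contamination $\mathbf{Q}\boldsymbol{a}$, and there is no direct implication from one to the other.)

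The paper's resolution is worth internalizing because it reuses the hypothesis a second time in a different place. Write $\boldsymbol{r}^{(k)}=(\mathbf{I}_n-\mathbf{Q}\mathbf{Q}^T)\boldsymbol{v}_k$ with $\boldsymbol{v}_k=F_{S\setminus S_k}(\boldsymbol{u}_0)+\mathbf{I}_{S_k}\mathbf{M}_k^{-1}\mathbf{I}_{S_k}^T\mathbf{Q}\mathbf{Q}^T F_{S\setminus S_k}(\boldsymbol{u}_0)$; your $\mathbf{Q}\boldsymbol{a}$ is exactly $\mathbf{Q}\mathbf{Q}^T\boldsymbol{v}_k$. The feedback term is supported on $S_k$, disjointly from $S\setminus S_k$, so by the Pythagorean theorem $\|\boldsymbol{v}_k\|_2<\|\boldsymbol{u}_0\|_2$ follows as soon as the feedback's norm is smaller than $\|F_{S_k}(\boldsymbol{u}_0)\|_2$, i.e., than the norm of the entries of $\boldsymbol{u}_0$ that it replaces. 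This is where \eqref{eq:delta_bound_definition} enters again: $\|\mathbf{M}_k^{-1}\|_2<2$ (a consequence of $\delta_s^2<1/2$, see \eqref{eq:norm<2}) together with Lemma~\ref{lem:vec_bound_fin} gives the feedback norm at most $2\delta_s^2\|F_{S\setminus S_k}(\boldsymbol{u}_0)\|_2<2\delta_s^2\|\boldsymbol{u}_0\|_2<\min_{l\in S}|u_l|\le\|F_{S_k}(\boldsymbol{u}_0)\|_2$. Hence $\|\boldsymbol{v}_k\|_2<\|\boldsymbol{u}_0\|_2$, the step-$k$ requirement $\delta_s<\sqrt{\min_{l\in S}|u_l|/(2\|\boldsymbol{v}_k\|_2)}$ is looser than the step-$0$ one, and the induction closes under the single stated hypothesis. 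This is essentially your second repair idea, but the missing ingredient is the per-coordinate comparison of the feedback against $\min_{l\in S}|u_l|$ (and thence against $\|F_{S_k}(\boldsymbol{u}_0)\|_2$), rather than against the global energy $\|\boldsymbol{u}_0\|_2$, which is what loses the factor $(1-\delta_s^2)^{-1/2}$ in your estimate.
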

The proof of the theorem is found in Appendix \ref{appendix:A}.

%

\begin{remark}
The condition under which the measurement vector $\boldsymbol{y}$ can be uniquely decomposed into parts $\boldsymbol{y}_0=\mathbf{X}\boldsymbol{\theta}_0=\mathbf{Q}\boldsymbol{w}_0$ plus $\boldsymbol{u}_0$, is given in Theorem \ref{THE:unique} (see also \cite{candes2005decoding,mitra2013analysis}).
\end{remark}
\begin{remark}
The bound found in \eqref{eq:delta_bound_definition}, has also a nice geometrical interpretation. The ratio $\min \{ |u_i|,\; u_i\not=0\}/||\boldsymbol{u}_0||_2$, corresponds to the cosine of the largest direction angle of vector $\boldsymbol{u}_0$. Moreover, it can readily be seen that this ratio is no greater than $1$, which means that the right hand side of \eqref{eq:delta_bound_definition} is bounded by $\sqrt{2}/2$. In other words, the condition of Theorem \ref{theor:exact_rec} forces $\omega_s$ to lie in the interval $(45^{\circ} ,90 ^{\circ}]$.
\end{remark}

\subsection{The presence of both inlier and outlier noise}
\label{subsec:inl_outl}
The following results show that when GARD recovers the support of the outlier vector, the approximation error is relatively small.
\begin{theorem}
Let $\mathbf{X}$ be a full column rank matrix and assume that $\boldsymbol{y} = \mathbf{X}\boldsymbol{\theta}_0+\boldsymbol{u}_0+ \boldsymbol{\eta}$, such that $||\boldsymbol{u}_0 ||_0 \leq s$ (at most $s$ outliers exist in the $\boldsymbol{y}$ variable) and also $\left\| \boldsymbol{\eta} \right\|_2 \leq \epsilon_0 $. If
\begin{equation}
\delta_{s}< \sqrt{\frac{\min \{|u_i|,\; u_i\not=0\}-(2+\sqrt{6})\epsilon_0}{2  ||\boldsymbol{u}_0||_2 }},
\label{eq:delta_bound_definition_epsilon}
\end{equation}
where $u_i$ are the elements of $\boldsymbol{u}_0$, $d=\lceil \frac{n}{s}\rceil$, then GARD guarantees that the support of the sparse outlier vector $\boldsymbol{u}_0$ is recovered \footnote{Recall the definition of ceiling, i.e., $\lceil x \rceil = \min\{n \in \Integ\ | \ n\geq x  \} $.}.
\label{theor:exact_support}
\end{theorem}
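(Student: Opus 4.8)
The plan is to mirror the exact-recovery argument of Theorem~\ref{theor:exact_rec} and to treat the bounded inlier vector $\boldsymbol\eta$ as a perturbation whose contributions I collect into the budget $(2+\sqrt6)\epsilon_0$. First I would pass to the orthonormalized model $\boldsymbol{y}=\mathbf{Q}\boldsymbol{w}_0+\boldsymbol{u}_0+\boldsymbol\eta$ of \eqref{eq:probdefmat_qr_noise}, so that the relevant separation quantity is the principal-angle / R.I.P. constant $\delta_s=\mu_s$ of Definitions~\ref{DEF:pa}--\ref{def:rip}. Since GARD always keeps the $\mathbf{Q}$-block active, Lemma~\ref{lem:fullrank} guarantees a unique least-squares solution at every step, and the residual is orthogonal to $\SPAN(\mathbf{Q})$ and to each previously selected $\boldsymbol{e}_{j}$. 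Writing $P_{k-1}$ for the orthogonal projection onto $\SPAN([\mathbf{Q}\ \mathbf{I}_{S_{k-1}}])$ and $\boldsymbol{u}_{\mathrm{rem}}$ for the part of $\boldsymbol{u}_0$ supported on the not-yet-recovered outliers $S\setminus S_{k-1}$, the fact that $\mathbf{Q}\boldsymbol{w}_0$ and the recovered outliers lie in the range of $P_{k-1}$ yields the clean decomposition $\boldsymbol{r}^{(k-1)}=(\mathbf I-P_{k-1})(\boldsymbol{u}_{\mathrm{rem}}+\boldsymbol\eta)$, which is the object I would analyze throughout.

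The argument proceeds by induction on the iteration index, the hypothesis being that every atom selected so far is a genuine outlier location, i.e. $S_{k-1}\subseteq S$. The inductive step is the \emph{correct-selection} claim: as long as $S_{k-1}\subsetneq S$, the largest-magnitude coordinate of $\boldsymbol{r}^{(k-1)}$ lies in $S\setminus S_{k-1}$. Reading off a single coordinate of $\boldsymbol{r}^{(k-1)}=\boldsymbol{u}_{\mathrm{rem}}+\boldsymbol\eta-P_{k-1}(\boldsymbol{u}_{\mathrm{rem}}+\boldsymbol\eta)$, for a remaining true outlier $i$ one gets $|r_i^{(k-1)}|\ge |u_i|-|\eta_i|-|[P_{k-1}(\boldsymbol{u}_{\mathrm{rem}}+\boldsymbol\eta)]_i|$, while for a non-outlier $j\notin S$ the outlier coordinate vanishes, giving $|r_j^{(k-1)}|\le |\eta_j|+|[P_{k-1}(\boldsymbol{u}_{\mathrm{rem}}+\boldsymbol\eta)]_j|$. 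The binding case is the step at which the smallest-magnitude outlier is still missing, which is exactly why $\min\{|u_i|\}$ appears in the bound. Estimating $|\eta_i|,|\eta_j|\le\|\boldsymbol\eta\|_2\le\epsilon_0$ and controlling the projection (leakage) terms — their outlier part by a two-fold use of the principal-angle inequality \eqref{eq:principalangle3} (once to extract the $\SPAN(\mathbf{Q})$-component of $\boldsymbol{u}_{\mathrm{rem}}$, once to project it back onto an outlier coordinate, which produces the $\delta_s^2$ factor and the bound $\delta_s^2\|\boldsymbol{u}_0\|_2$ per coordinate), their noise part by the R.I.P. — I expect the selection inequality $\max_{i\in S\setminus S_{k-1}}|r_i^{(k-1)}|>\max_{j\notin S}|r_j^{(k-1)}|$ to collapse to $2\delta_s^2\|\boldsymbol{u}_0\|_2+(2+\sqrt6)\epsilon_0<\min\{|u_i|:u_i\ne0\}$, i.e. precisely condition \eqref{eq:delta_bound_definition_epsilon}, which reduces to \eqref{eq:delta_bound_definition} when $\epsilon_0=0$. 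The constant $2+\sqrt6$ should split as two direct contributions $|\eta_i|,|\eta_j|\le\epsilon_0$ plus a projected-noise contribution the R.I.P. caps at $\sqrt6\,\epsilon_0$.

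Finally I would settle termination, i.e. that GARD halts exactly when the full support has been collected. Once $S\subseteq S_k$ the outlier part is fully represented, so $\boldsymbol{r}^{(k)}=(\mathbf I-P_k)\boldsymbol\eta$ and $\|\boldsymbol{r}^{(k)}\|_2\le\|\boldsymbol\eta\|_2\le\epsilon_0$, forcing the while-loop to stop; conversely, while an outlier of magnitude at least $\min\{|u_i|\}$ remains, the same coordinate lower bound together with \eqref{eq:delta_bound_definition_epsilon} gives $\|\boldsymbol{r}^{(k)}\|_2\ge|r_i^{(k)}|>\epsilon_0$, so the algorithm cannot stop early. Combined with the strict residual decrease of Theorem~\ref{theor:errordec} and the fact that at most $s$ correct steps suffice, this shows the recovered active set equals $S$. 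I expect the main obstacle to be the two-sided control of $P_{k-1}(\boldsymbol{u}_{\mathrm{rem}}+\boldsymbol\eta)$ at the level of individual coordinates: cleanly separating the outlier-leakage piece (bounded via \eqref{eq:principalangle3}) from the bounded-noise piece, and then tracking the numerical constants faithfully so they sum to exactly $2\delta_s^2\|\boldsymbol{u}_0\|_2+(2+\sqrt6)\epsilon_0$ rather than to a looser bound, is the delicate part; the quantity $d=\lceil n/s\rceil$ quoted in the statement most likely enters here, through a block-wise norm-equivalence (relating an $\ell_\infty$ coordinate to an $\ell_2$ energy over index sets of size $s$) used to sharpen one of these constants.
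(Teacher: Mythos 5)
Your proposal follows essentially the same route as the paper's Appendix B: pass to the QR model, induct on correct selection via coordinate-wise lower/upper bounds on the residual, bound the outlier leakage by $\delta_s^2\|\boldsymbol{u}_0\|_2$ per coordinate, and collect the noise contributions into exactly $(2+\sqrt{6})\epsilon_0$ (two direct $\epsilon_0$ terms plus $2\sqrt{3/2}\,\epsilon_0$ of projected noise, obtained from $\delta_s<\sqrt{2}/2$ and the bound $\|\boldsymbol{\eta}_k\|_2\le\sqrt{3}\,\epsilon_0$ on the noise vector corrected by the joint projection). The delicate step you flag --- controlling $P_{k-1}$ coordinate-wise --- is handled in the paper by an explicit matrix-inversion-lemma computation that produces modified vectors $\boldsymbol{v}_k$ and $\boldsymbol{\eta}_k$ (differing from $\boldsymbol{u}_0$ and $\boldsymbol{\eta}$ only on the already-selected indices) together with the estimate $\|\boldsymbol{v}_k\|_2<\|\boldsymbol{u}_0\|_2$ carried over from Appendix A; note also that $d=\lceil n/s\rceil$ is never actually used in the proof, so no block-wise norm equivalence is needed.
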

The proof of the theorem is found in Appendix \ref{appendix:B}.
\begin{lemma}
\label{lem:singular_bound}
Assume that there exist $0 \leq \delta_s<1$, such that the R.I.P. condition holds. It stems directly that the smallest singular value $\sigma_{min}$ of the matrix $\mathbf{\Phi}_{S_{ac}}=[\mathbf{Q}\ \mathbf{I}_S]$ is lower bounded by 
\begin{equation}
\label{eq:min_sin_val}
\sigma_{min}\geq \sqrt{1-\delta_s}.
\end{equation}
\end{lemma}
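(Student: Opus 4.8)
The plan is to establish the bound directly from the variational (Courant–Fischer/Rayleigh-quotient) characterization of the smallest singular value, with the left-hand side of the R.I.P. inequality doing all the work. First I would invoke the result cited just before the statement, namely that for an orthonormal regressor matrix $\mathbf{Q}$ the R.I.P. constant coincides with the smallest principal angle quantity, so that $\mu_s=\delta_s$. This lets me rewrite the lower R.I.P. bound of \eqref{eq:rip} for the matrix $\mathbf{\Phi}_{S_{ac}}=[\mathbf{Q}\ \mathbf{I}_S]$, with $|S|\le s$, as
\begin{equation}
(1-\delta_s)\,\|\boldsymbol{\alpha}\|_2^2 \leq \|\mathbf{\Phi}_{S_{ac}}\boldsymbol{\alpha}\|_2^2,\qquad \text{for all } \boldsymbol{\alpha}.
\label{eq:lemma_rip_lower}
\end{equation}

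Next I would recall the standard fact that the smallest singular value of $\mathbf{\Phi}_{S_{ac}}$ admits the variational description
\begin{equation}
\sigma_{min}^2 = \lambda_{min}\big(\mathbf{\Phi}_{S_{ac}}^T\mathbf{\Phi}_{S_{ac}}\big) = \min_{\boldsymbol{\alpha}\neq \mathbf{0}} \frac{\boldsymbol{\alpha}^T\mathbf{\Phi}_{S_{ac}}^T\mathbf{\Phi}_{S_{ac}}\boldsymbol{\alpha}}{\boldsymbol{\alpha}^T\boldsymbol{\alpha}} = \min_{\boldsymbol{\alpha}\neq \mathbf{0}} \frac{\|\mathbf{\Phi}_{S_{ac}}\boldsymbol{\alpha}\|_2^2}{\|\boldsymbol{\alpha}\|_2^2},
\label{eq:rayleigh}
\end{equation}
since $\mathbf{\Phi}_{S_{ac}}^T\mathbf{\Phi}_{S_{ac}}$ is symmetric positive semidefinite and its eigenvalues are the squared singular values. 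Because \eqref{eq:lemma_rip_lower} holds for every nonzero $\boldsymbol{\alpha}$, the Rayleigh quotient in \eqref{eq:rayleigh} is bounded below by $1-\delta_s$ uniformly, and taking the minimum over $\boldsymbol{\alpha}$ yields $\sigma_{min}^2\geq 1-\delta_s$. Taking square roots (legitimate since $\delta_s<1$ guarantees $1-\delta_s>0$) gives the claimed bound \eqref{eq:min_sin_val}.

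There is essentially no hard step here: the result is an immediate consequence of the lower R.I.P. inequality together with the minimax characterization of singular values. The only point requiring care is bookkeeping, namely verifying that the R.I.P. as stated in Definition \ref{def:rip} indeed applies to the specific support set $S$ appearing in $\mathbf{\Phi}_{S_{ac}}$; this is ensured by the generalization in \eqref{EQ:principal_angle_final} and the identification $\mu_s=\delta_s$, which make the bound valid for every $S$ with $|S|\le s$. Consequently the matrix is full column rank and $\sigma_{min}>0$, consistent with Lemma \ref{lem:fullrank}.
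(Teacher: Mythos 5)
Your proof is correct and follows essentially the same route as the paper: the paper applies the lower R.I.P. inequality to the right singular vector attaining $\|\mathbf{\Phi}_{S_{ac}}\boldsymbol{v}_m\|_2^2=\sigma_{min}^2\|\boldsymbol{v}_m\|_2^2$, which is just the extremal case of your Rayleigh-quotient argument. Your additional remarks on the identification $\mu_s=\delta_s$ and the validity for every $|S|\le s$ are sound bookkeeping that the paper leaves implicit.
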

\begin{proof}
Let $\boldsymbol{v}_m$ be the eigenvector which is associated with the smallest singular value of $\mathbf{\Phi}_{S_{ac}}$, then 
\[
\left\| \mathbf{\Phi}_{S_{ac}} \boldsymbol{v}_m \right\|_2^2= \sigma_{min}^2 \left\| \boldsymbol{v}_m \right\|_2^2.
\]
Since \eqref{eq:rip} holds for every vector, \eqref{eq:min_sin_val} follows.
\end{proof}
\begin{theorem}
In the case where GARD recovers the exact support of the sparse outlier vector $\boldsymbol{u}_0$, it approximates the ideal solution $\boldsymbol{\theta}_0$, with estimate $\boldsymbol{\theta}_*,$
acquiring an error
\begin{equation}
\label{eq:error_bound}
|| \boldsymbol{\theta}_*-  \boldsymbol{\theta}_{0}||_2 \leq \frac{\epsilon_0}{\tau \sqrt{1-\delta_s}}. 
\end{equation}
where $\tau$ is the smallest singular value of matrix $\mathbf{X}.$
\label{theor:errorbound}
\end{theorem}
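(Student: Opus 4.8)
The plan is to work in the orthonormalized coordinates supplied by the reduced QR decomposition $\mathbf{X}=\mathbf{QR}$ and to exploit the fact that, once the correct support $S$ has been identified, GARD merely returns the unconstrained Least-Squares solution over the full column rank matrix $\mathbf{\Phi}_{S_{ac}}=[\mathbf{Q}\ \mathbf{I}_S]$. First I would record that, by the invertibility of $\mathbf{R}$, minimizing $\|\boldsymbol{y}-\mathbf{X}\boldsymbol{\theta}-\mathbf{I}_S\boldsymbol{v}\|_2$ over $(\boldsymbol{\theta},\boldsymbol{v})$ is equivalent, through the substitution $\boldsymbol{w}=\mathbf{R}\boldsymbol{\theta}$, to minimizing $\|\boldsymbol{y}-\mathbf{Q}\boldsymbol{w}-\mathbf{I}_S\boldsymbol{v}\|_2$. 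Hence the GARD estimate satisfies $\boldsymbol{\theta}_*=\mathbf{R}^{-1}\boldsymbol{w}_*$, where $\boldsymbol{z}_*=(\boldsymbol{w}_*^T,[\boldsymbol{u}_*]_S^T)^T$ is the Least-Squares solution associated with $\mathbf{\Phi}_{S_{ac}}$.

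Next I would use the model \eqref{eq:probdefmat_qr_noise}, which on the recovered support reads $\boldsymbol{y}=\mathbf{\Phi}_{S_{ac}}\boldsymbol{z}_0+\boldsymbol{\eta}$ with $\boldsymbol{z}_0=(\boldsymbol{w}_0^T,[\boldsymbol{u}_0]_S^T)^T$. Writing the solution via the pseudo-inverse $\boldsymbol{z}_*=(\mathbf{\Phi}_{S_{ac}}^T\mathbf{\Phi}_{S_{ac}})^{-1}\mathbf{\Phi}_{S_{ac}}^T\boldsymbol{y}$ and substituting the model, the signal term collapses because $(\mathbf{\Phi}_{S_{ac}}^T\mathbf{\Phi}_{S_{ac}})^{-1}\mathbf{\Phi}_{S_{ac}}^T\mathbf{\Phi}_{S_{ac}}=\mathbf{I}$, leaving the clean identity $\boldsymbol{z}_*-\boldsymbol{z}_0=(\mathbf{\Phi}_{S_{ac}}^T\mathbf{\Phi}_{S_{ac}})^{-1}\mathbf{\Phi}_{S_{ac}}^T\boldsymbol{\eta}$. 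Taking norms, the spectral norm of the pseudo-inverse equals $1/\sigma_{min}(\mathbf{\Phi}_{S_{ac}})$, so with $\|\boldsymbol{\eta}\|_2\leq\epsilon_0$ and Lemma \ref{lem:singular_bound} (which gives $\sigma_{min}\geq\sqrt{1-\delta_s}$) I obtain $\|\boldsymbol{z}_*-\boldsymbol{z}_0\|_2\leq \epsilon_0/\sqrt{1-\delta_s}$.

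Finally I would restrict attention to the regression block. Since $\boldsymbol{w}_*-\boldsymbol{w}_0$ is a sub-vector of $\boldsymbol{z}_*-\boldsymbol{z}_0$, its norm is no larger, hence $\|\boldsymbol{w}_*-\boldsymbol{w}_0\|_2\leq\epsilon_0/\sqrt{1-\delta_s}$. Transforming back through $\boldsymbol{\theta}_*-\boldsymbol{\theta}_0=\mathbf{R}^{-1}(\boldsymbol{w}_*-\boldsymbol{w}_0)$ yields $\|\boldsymbol{\theta}_*-\boldsymbol{\theta}_0\|_2\leq\|\mathbf{R}^{-1}\|_2\,\|\boldsymbol{w}_*-\boldsymbol{w}_0\|_2$. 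The last ingredient is the observation that $\mathbf{X}^T\mathbf{X}=\mathbf{R}^T\mathbf{Q}^T\mathbf{Q}\mathbf{R}=\mathbf{R}^T\mathbf{R}$, so $\mathbf{X}$ and $\mathbf{R}$ share the same singular values; thus $\|\mathbf{R}^{-1}\|_2=1/\sigma_{min}(\mathbf{R})=1/\tau$, and the claimed bound $\|\boldsymbol{\theta}_*-\boldsymbol{\theta}_0\|_2\leq\epsilon_0/(\tau\sqrt{1-\delta_s})$ follows.

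The argument is essentially a chain of norm inequalities, so there is no deep obstacle. The only points requiring care — the main (minor) difficulties — are justifying that, once the support is exact, the error inherits only the inlier noise $\boldsymbol{\eta}$ (i.e. the vanishing of the signal term under the pseudo-inverse, which relies on $\mathbf{\Phi}_{S_{ac}}$ having full column rank), and the transfer of singular values from $\mathbf{X}$ to $\mathbf{R}$ through the orthonormality of $\mathbf{Q}$.
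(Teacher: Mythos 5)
Your proposal is correct and follows essentially the same route as the paper's own proof: express the post-recovery estimate as the Least-Squares solution over $\mathbf{\Phi}_{S_{ac}}=[\mathbf{Q}\ \mathbf{I}_S]$, obtain $\boldsymbol{z}_*-\boldsymbol{z}_0=\mathbf{\Phi}_{S_{ac}}^{\dagger}\boldsymbol{\eta}$, bound this via Lemma~\ref{lem:singular_bound}, and map back through $\mathbf{R}^{-1}$ using the fact that $\mathbf{X}$ and $\mathbf{R}$ share singular values. Your writeup is, if anything, slightly more explicit than the paper's in justifying $\boldsymbol{\theta}_*=\mathbf{R}^{-1}\boldsymbol{w}_*$ and in restricting to the regression block before applying $\|\mathbf{R}^{-1}\|_2$.
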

\begin{proof}
The proof follows the same concepts as the stability result of Theorem 5.1 in \cite{donoho2006stable}. Since the support set of the sparse estimate $\boldsymbol{u}_*$ is also $S$, matrix $\mathbf{\Phi}=[\mathbf{Q}\ \mathbf{I}_n]$ could also be written as $\mathbf{\Phi}=[\mathbf{\Phi}_{S_{ac}}\ \mathbf{I}_{S^c}]$. Thus, we have that 
\[
\boldsymbol{z}_*= \begin{pmatrix}
\boldsymbol{w}_{*}\\ [\boldsymbol{u}_{*}]_S
\end{pmatrix}  := \argmin_{\boldsymbol{z}} ||\boldsymbol{y}-\mathbf{\Phi}_{S_{ac}}\boldsymbol{z}||_2^2=\mathbf{\Phi}_{S_{ac}}^{\dagger} \boldsymbol{y},
\]
where matrix $\mathbf{\Phi}_{S_{ac}}^{\dagger}$ denotes the Moore-Penrose pseudoinverse of matrix $\mathbf{\Phi}_{S_{ac}}$. Equation \eqref{eq:probdefmat_qr_noise} (which is equivalent to \eqref{eq:probdefmat2}) could also be written in a more compact form as 
\[
\boldsymbol{y}= \mathbf{\Phi}_{S_{ac}}\boldsymbol{z}_0 +\boldsymbol{\eta},\]
where $\boldsymbol{z}_{0}= \begin{pmatrix}
\boldsymbol{w}_{0}\\ [\boldsymbol{u}_{0}]_S
\end{pmatrix}$. 
Hence, we take
\[
\boldsymbol{z}_* = \mathbf{\Phi}_{S_{ac}}^{\dagger} \boldsymbol{y} = \boldsymbol{z}_0 + \mathbf{\Phi}_{S_{ac}}^{\dagger}\boldsymbol\eta.
\]
Finally,
\begin{align}
|| \boldsymbol{z}_*-  \boldsymbol{z}_{0}||_2 &\leq || \mathbf{\Phi}_{S_{ac}}^{\dagger} \boldsymbol\eta ||_2 \leq ||\mathbf{\Phi}_{S_{ac}}^{\dagger} ||_2\cdot  ||\boldsymbol\eta ||_2\nonumber\\
&\leq \sigma_{min}^{-1}\epsilon_0\leq \epsilon_0/\sqrt{1-\delta_s},
\label{eq:er_bound}
\end{align}
where we have also used that $|| \mathbf{\Phi}_{S_{ac}}^{\dagger}||_2$ is bounded, using the smaller singular value $\sigma_{min}$ of matrix $\mathbf{\Phi}_{S_{ac}}$, as well as \eqref{eq:min_sin_val}. The result follows from the fact that \[ \| \boldsymbol{\theta}_*  - \boldsymbol{\theta}_0 \|_2 \leq  \| \mathbf{R}^{-1} \|_2 \|  \boldsymbol{z}_* - \boldsymbol{z}_0 \|_2, \]
where $\|\mathbf{R}^{-1} \|_2$ is the spectral norm of $\mathbf{R}^{-1}$ equal to $\sigma_{\min}(\mathbf{R})^{-1}$. Since $\mathbf{X}=\mathbf{QR}$, the smallest singular value of $\mathbf{R}$ equals\footnote{Matrices $\mathbf{X}$ and $\mathbf{R}$ share the same singular values.} the smallest singular value $\tau=\sigma_{\min}(\mathbf{X})$ of $\mathbf{X}$, thus the proof is complete.
\end{proof}

\begin{remark}
Let 
\[
c = \sqrt{\frac{\min \{|u_i|,\; u_i\not=0\}-(2+\sqrt{6})\epsilon_0}{2  ||\boldsymbol{u}_0||_2 }}.
\]
Although, $c$ is readily computed, recall that $\delta_s$ is not, since this constant encloses the combinatorial nature of the problem for all the possible subsets of cardinality at most $s$. As a consequence, inequalities \eqref{eq:delta_bound_definition}, \eqref{eq:delta_bound_definition_epsilon}, \eqref{eq:min_sin_val} and \eqref{eq:error_bound}, could not be verified in practice; nonetheless, they all serve significant theoretical purposes. 
\end{remark}
\begin{remark}
Combining \eqref{eq:error_bound} with \eqref{eq:delta_bound_definition_epsilon}, we also have the following bound for the approximation of $\boldsymbol{\theta}_0$:
\begin{equation}
\label{eq:ds_approx_er}
\left\| \boldsymbol{\theta}_{*} - \boldsymbol{\theta}_0 \right\|_2 \leq \frac{\epsilon_0}{\tau \sqrt{1-c}},
\end{equation}
which due to its immediacy will be tested and verified later, in section \ref{sec:exper}. However, it is looser than that of \eqref{eq:error_bound}.
\end{remark}
\begin{remark}
\label{rem:bound_of_outl_e0}

The bound $c$ in \eqref{eq:delta_bound_definition_epsilon}, clearly depends on the sparsity level and values of the  outlier vector, but also on the inlier noise bound $\epsilon_0$. Also notice, that $\epsilon_0=0$ leads to the bound of $\delta_s$ in the noiseless case, i.e., \eqref{eq:delta_bound_definition}. Since in \eqref{eq:delta_bound_definition_epsilon} two terms affect the bound, we cannot except to recover the support exactly, in case both outlier noise of high density and heavy inlier noise exist. Such a scenario, would imply the bound on $\delta_s$ to be extremely tight and it is likely that it could not be satisfied. Finally, notice that $ \min \{|u_i|,\; u_i\not=0 \}$ should be greater than $(2+\sqrt{6})\epsilon_0$, if we would like \eqref{eq:delta_bound_definition_epsilon} to be valid. This could also be considered as a measure, of when an error value is considered as an outlier.
\end{remark}


\section{Experiments}
\label{sec:exper}

\begin{table*}
\begin{center}
\begin{tabular}{|c|c|c|}
\hline
Algorithm&Sol. to problem&Complexity \\
\hline
(GARD) &\eqref{eq:probformL0} or \eqref{eq:probformL0tog}  & $O\big(m^3/3 + k^3/2 +(n+3k)m^2 + 3kmn \big),\ k<< n $\\ \hline
(M-est) & \eqref{eq:M-est} &  $O\big( m^3/3+ nm^2 \big)$/step\\ \hline
(SOCP) &\eqref{eq:SOCP}  & $O\big( (n+m)^{2.5}n\big)$\\ \hline
(ADMM) &\eqref{eq:genlasso}  & $O\big((n+m)^3/3 + n(n+m)^2 \big)$/step\\
\hline
(SBL) &\eqref{eq:Bayes1}  & $O\big(m^3/3 +nm^2 \big)$/step\\
\hline
(ROMP) &  \eqref{eq:ROMP} & - \\ \hline
\end{tabular}
\end{center}
\caption{For (GARD), $k$ is the number of times the algorithm identifies an outlier. For (GARD) and (SOCP) total complexity is given. For the rest total complexity depends on the number of iterations for convergence.}
\label{tab:complexity}
\end{table*}

The setup for each one of the pre-existing methods (see section \ref{prel&relwork}), which are compared with GARD, are listed bellow:
\begin{itemize}
\item (M-est): In the following experiments, Tukey's biweight (or bisquare) robust (but nonconvex) function has been employed. This is included in the MATLAB function ``robustfit''. For $\sigma$, we have used the default parameter value setting (see \cite{rousseeuw2005robust, maronna2006robust}).

\item (SOCP): In order to solve the (SOCP) problem, we employed the MATLAB function ``SeDuMi'', which is included in the optimization package ``CVX''  of Stanford University, (CVX RESEARCH: http://cvxr.com/ (31/01/2014)). The input parameter for SeDuMi, is the bound of the inlier noise, used  for the definition of the second order cone.

\item (ADMM): For this method, parameter $\lambda$ should be predefined. Furthermore, the parameter, $\rho$, that is used for the soft thresholding operator is also predefined (low) to $\rho_{0}=10^{-4}$ and adapts at each step via $\rho_{i}=min\{5, 1.1\rho_{i-1}\}$. We have also employed a termination criterion, when the norm of the estimate undergoes changes from one step to the next, less than the predefined threshold of $10^{-4}.$

\item (SBL): Input parameters, $\sigma_{(0)}^2$, $\boldsymbol\theta_{(0)}$  and $\gamma_{i_{(0)}}$ are initialized. Following \cite{tipping2001sparse, wipf2004sparse}, we have also pruned the hyperparameters $\gamma_{i_{(k)}}$ from future iterations, if they become smaller than a predefined threshold (set low to $10^{-5}$).  Although the computational cost for Robust (SBL) is $O(m^3/3 +nm^2 )$ per step, the total cost depends on other variables too; such are the number of hyperparameters, that are pruned from future iterations, as well as the number of iterations needed for convergence. This is also the case for other methods, too.

\item (ROMP): The algorithm makes use of OMP's main iteration loop; in the first iteration, just a single column from matrix $\mathbf{X}$ participates in the M-est solution and each time the number of columns is augmented by one. Since the method solves an (IRLS) (or M-est at each step), instead of solving a Least-Squares problem restricted on the active set of columns, the complexity of the algorithm is not given in closed form. Once again, we have used Tukey's biweight function (\textit{robustfit}) with the default parameter settings, as in the M-est. The algorithm is terminated once the residual error drops below the bound of the inlier noise $\epsilon_0$.
\end{itemize}

In the experiments section, we have tested and analyzed the performance of each related algorithm. The experimental set up parallels that of \cite{mitra2013analysis}. Our data $(y_i,\boldsymbol{x}_i),\ i=1,2,...,n$, $\boldsymbol{x}_i\in \Real^m$ are generated via equation \eqref{eq:probdefmat2}; for the case where no inlier noise exists, we have set $\boldsymbol{\eta}=\boldsymbol{0}$. The rows of matrix $\mathbf{X}$, i.e., $\boldsymbol{x}_i$'s, are obtained by uniformly sampling an $m$-dimensional hypercube centered around the origin and $\boldsymbol{\theta}_0 \in \Real^m$ are random vectors, with values chosen from the normal distribution with mean value $0$ and standard deviation set to $5$.

\begin{figure}
\begin{subfigure}{.5\textwidth}
  \centering
  \includegraphics[scale=0.5]{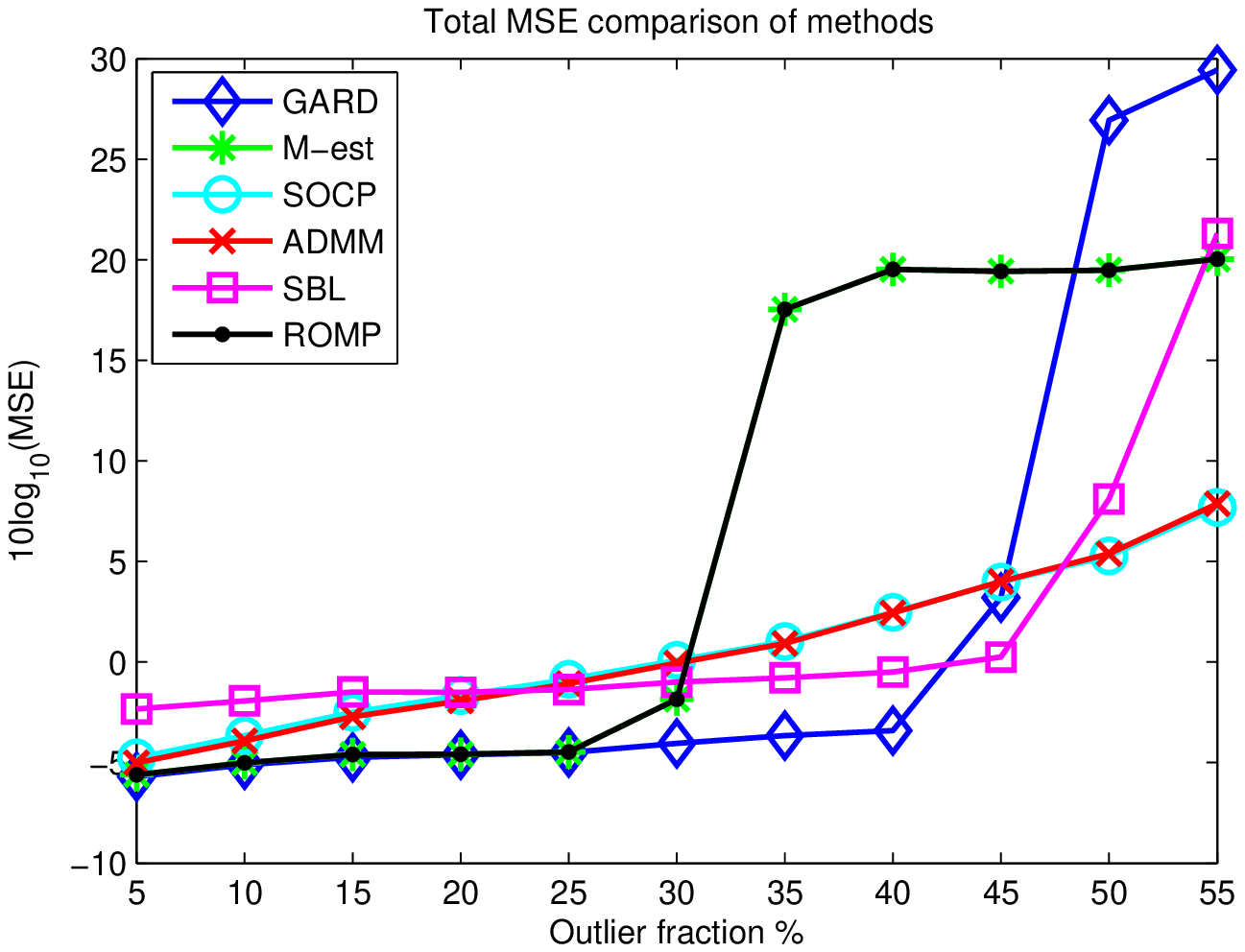}
  \caption{Dimension of the unknown vector $m=50$.}
  \label{fig:sfig1}
\end{subfigure}%
\begin{subfigure}{.5\textwidth}
  \centering
  \includegraphics[scale=0.5]{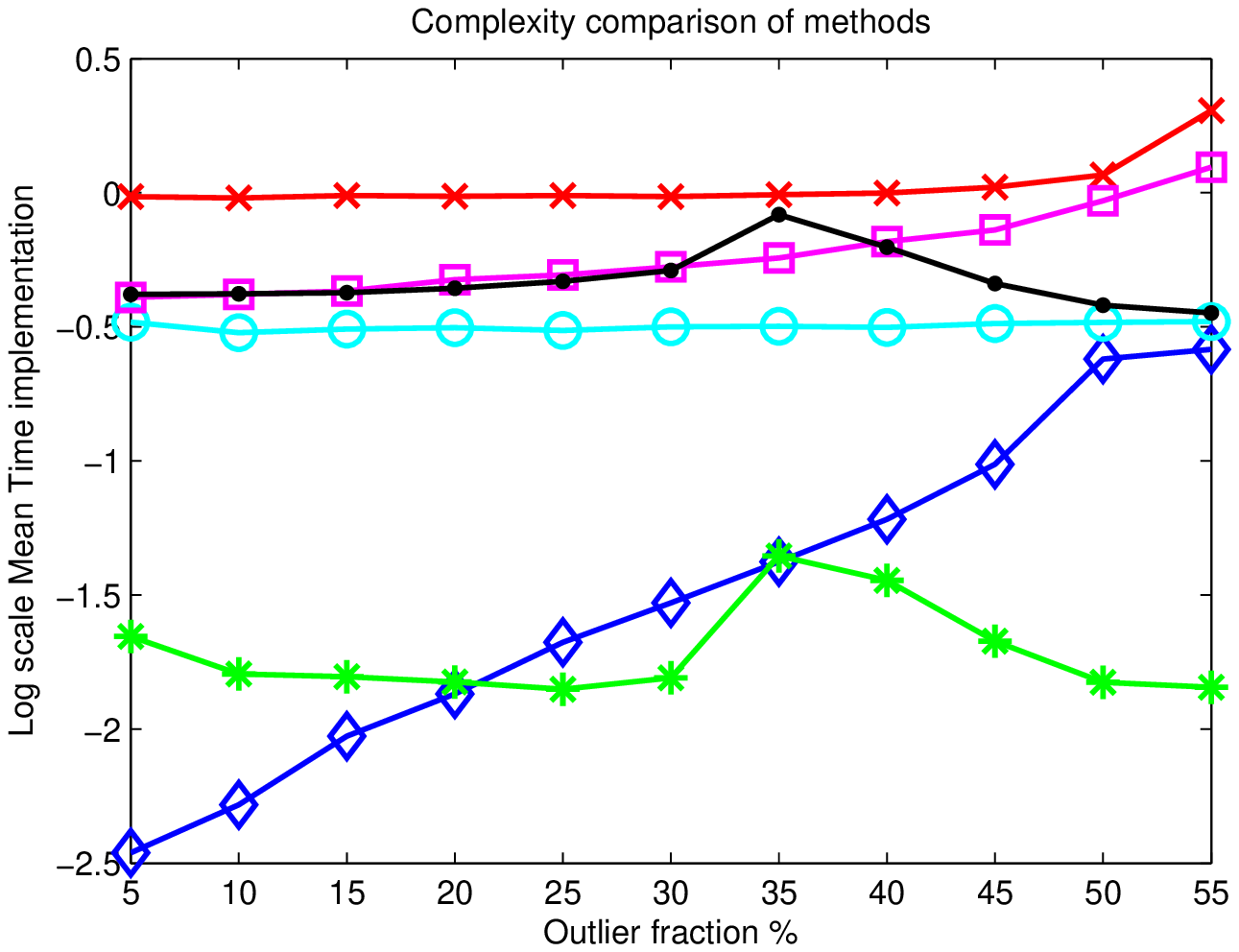}
  \caption{Dimension of the unknown vector $m=50$.}
  \label{fig:sfig1}
\end{subfigure} \quad   
\begin{subfigure}{.5\textwidth}
  \centering
  \includegraphics[scale=0.5]{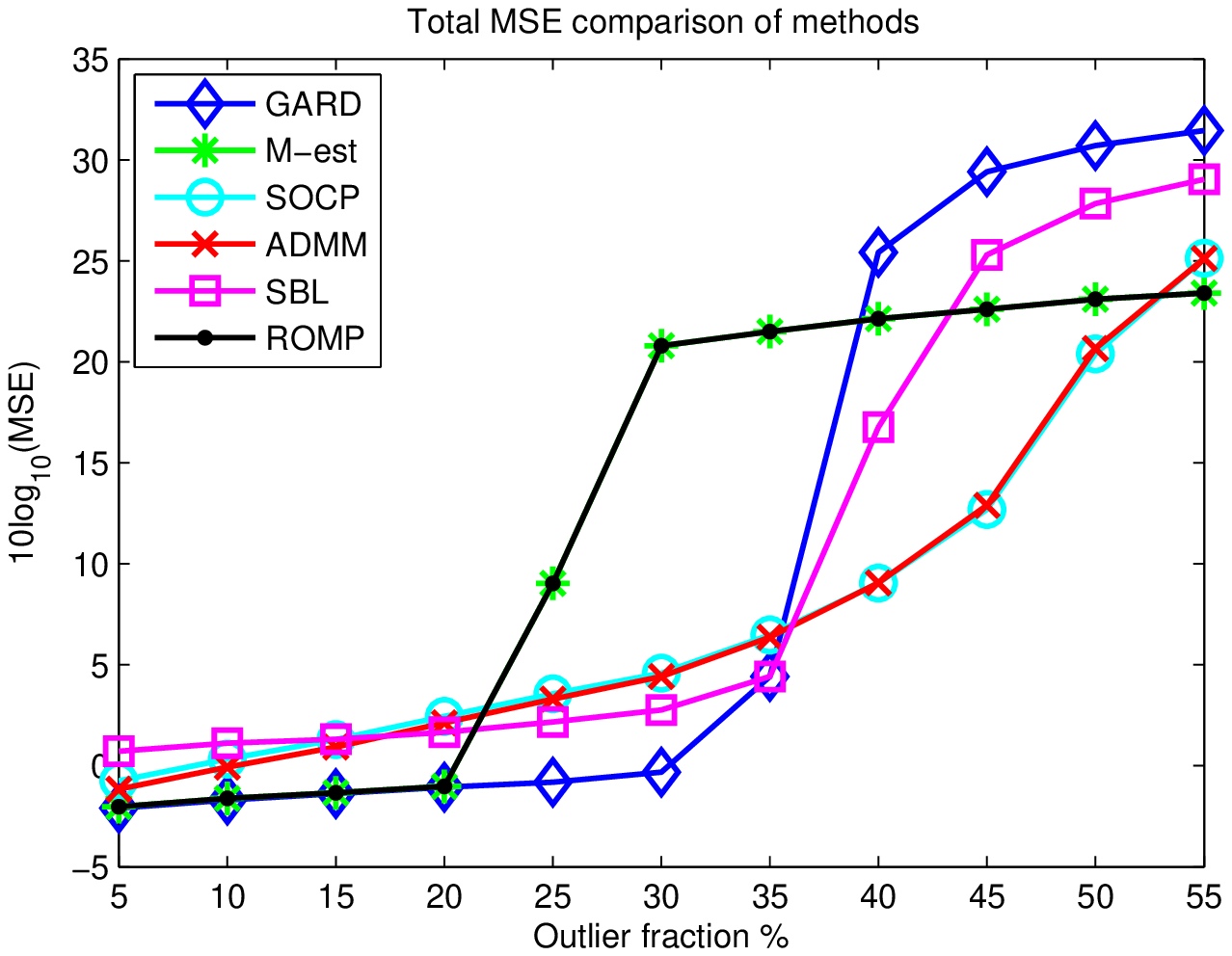}
  \caption{Dimension of the unknown vector $m=100$.}
  \label{fig:sfig1}
\end{subfigure}%
\begin{subfigure}{.5\textwidth}
  \centering
  \includegraphics[scale=0.5]{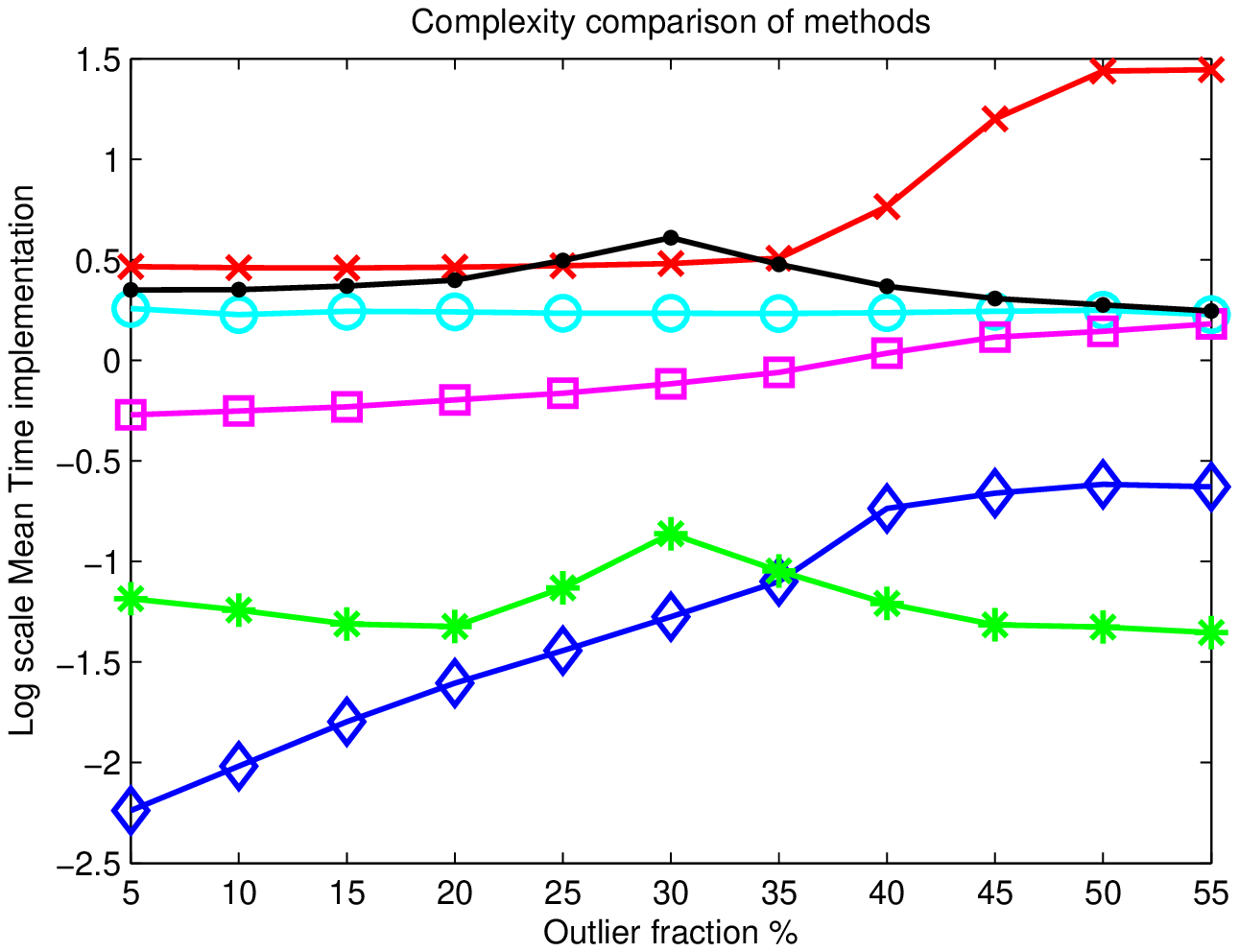}
  \caption{Dimension of the unknown vector $m=100$.}
  \label{fig:sfig2}
\end{subfigure}  \quad   
\begin{subfigure}{.5\textwidth}
  \centering
  \includegraphics[scale=0.5]{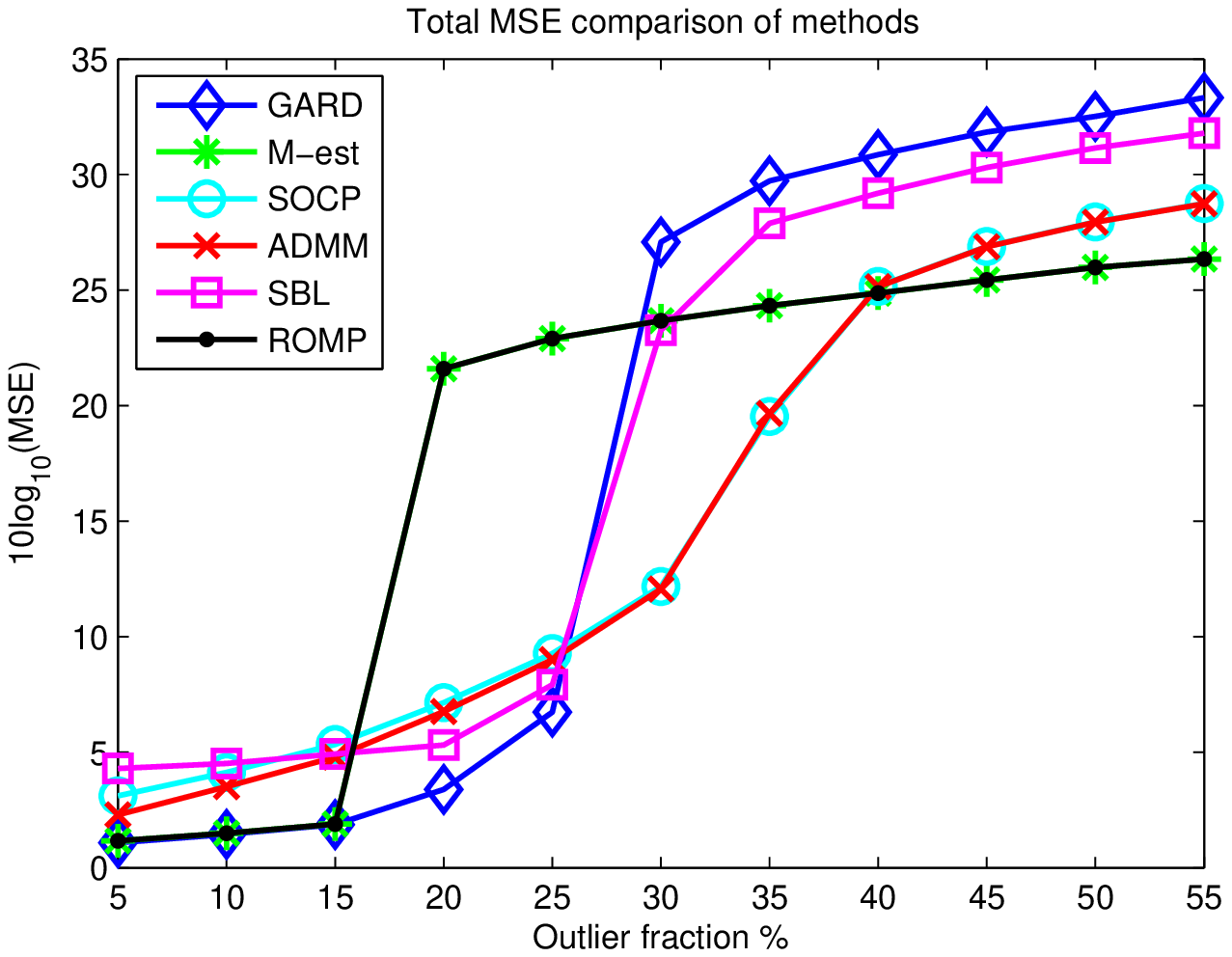}
  \caption{Dimension of the unknown vector $m=170$.}
  \label{fig:sfig2}
\end{subfigure}
\begin{subfigure}{.5\textwidth}
  \centering
  \includegraphics[scale=0.5]{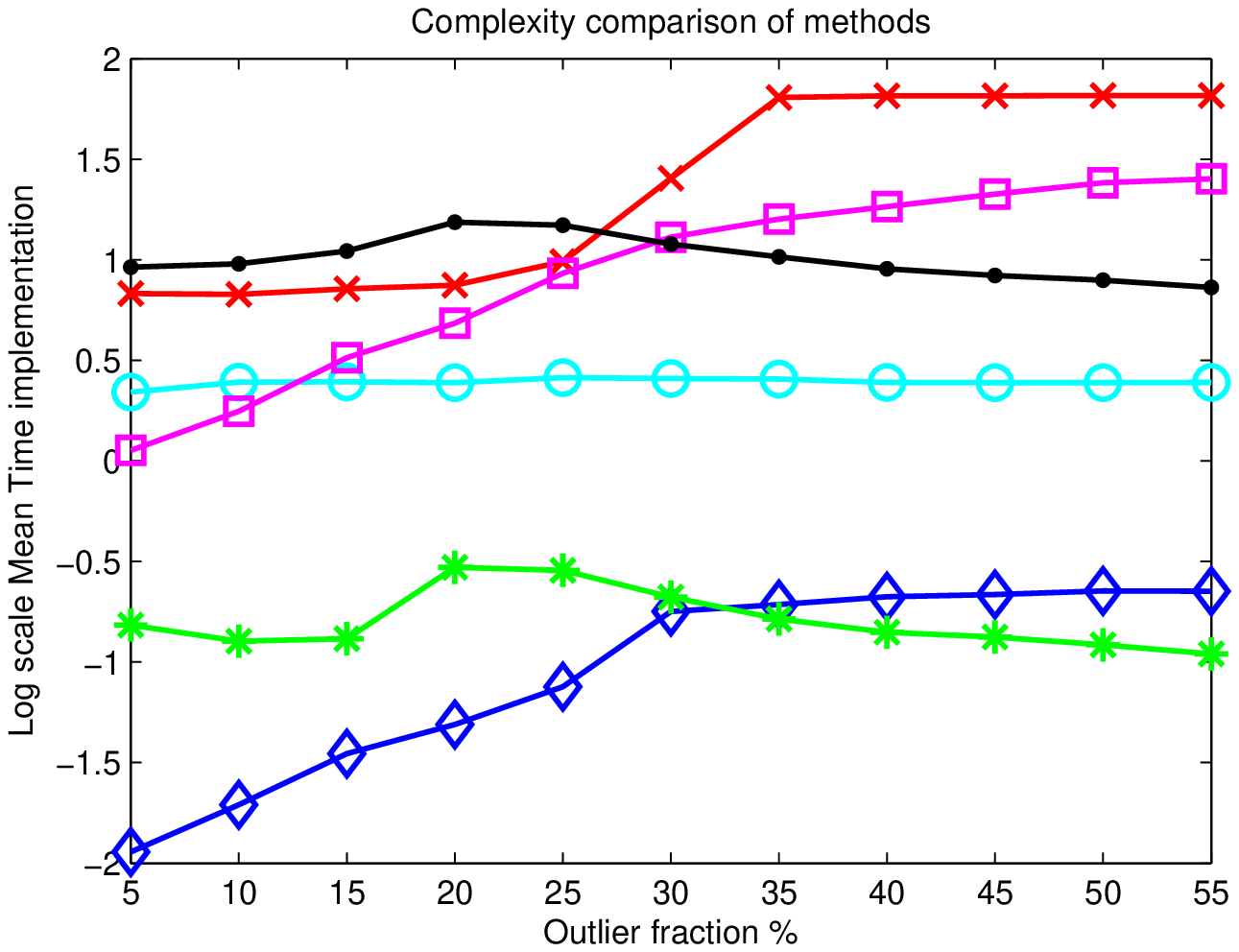}
  \caption{Dimension of the unknown vector $m=170$.}
  \label{fig:sfig2}
\end{subfigure}
\caption{Figures (a), (c), (e): The attained MSE versus the outlier fraction, for various dimensions of the unknown vector $\boldsymbol\theta_{0}$.  In all cases, $n=600$ observations were used. Figures (b), (d), (f): Log-scale of Mean Time versus the fraction of outliers.}
\label{fig:MSE-time}
\end{figure}

\subsection{Mean-Square Error estimation}
\label{sec:MSEest}

In the first experiment, we compared all methods with respect to the mean-square error (MSE), which is computed as the average, over  $100$ realizations at each outlier vector density, of the squared norm of the difference between the estimation vector $\boldsymbol{\theta}_*$ and the unknown vector $\boldsymbol{\theta}_0$. In parallel, we have also computed the average time (MT) (in sec) that each method requires to complete the estimation, for each outlier density. Aiming for more details, we have plotted the results in a logarithmic scale for each dimension of the unknown vector/signal $\boldsymbol\theta_{0}$  ($m=50,100,170$).

Outlier values are equal to $\pm 25$, in $s$ indices, uniformly sampled over $n$ coordinates ($s<n$). Although, an outlier vector is sparse by definition, 
in some experiments we extended the density level, in order to test each method to its limits. The inlier noise vector has elements drawn from the standard Gaussian distribution, with $\sigma = 1$ and inlier noise bound $\epsilon_0$, which is assumed to be known.

The input parameter for GARD, SOCP and ROMP is the inlier noise bound $\epsilon_0$. For ADMM, the regularization parameter is set to $\lambda=1.2$. Note that all methods were carefully tuned so that to optimize their performance. A major drawback of the SBL is its sensitivity to the choice of the initial values. Recall that this is a non-convex method, which cannot guarantee that the global minimum is attained for each dimension $m$, while the time needed for each implementation cannot be assured, since the number of iterations until convergence strongly depend on those parameters. Hence, for this method, random initialization was performed a number of times and the best solution was selected. Finally, the (M-est) does not require any predefined parameters.

Figure \ref{fig:MSE} shows the MSE (in dBs), versus the fraction of the sparse outlier vector for various dimensions of the unkown vector $\boldsymbol{\theta}_0$. The MT, that is required for each algorithm to converge is shown in Figure \ref{fig:time} in logarithmic scale. Although complexity of each method is already addressed (table \ref{tab:complexity}), in certain algorithms, the number of iterations until convergence greatly influences the required total implementation time. Observe that GARD attains the lowest MSE among the competitive methods for outlier fraction lower than $40\%,\ 35 \%$ and $25\%$ for dimension of the unknown vector $m=50,100,170$, respectively. The performance of M-est and ROMP is also notable, since both methods also attain a low MSE. However, this is only possible for outlier fraction of less than $25\%,\ 20 \%$ and $15\%$ (MSE equal to that of GARD). In particular, we found that M-est and ROMP have identical performance, despite the fact that ROMP combines two methods, resulting to a higher computational cost.

It should also be noted, that in Figure \ref{fig:MSE} (b) and (c), all algorithms break their performance at lower outlier fractions with respect to GARD. However, the interesting \textit{zone} of outlier vector density, in real time applications, is between $0\% - 20 \%$ of the sample set, since greater percentages do not imply outlying values. Hence, GARD attains the lowest MSE within this sensitive zone.
Finally, the experiments show that ADMM and SOCP attain a similar performance, as expected, due to the fact that they both address the same problem. 

Besides its superior performance with respect to recovery error, GARD's computational requirements remain low. As shown in Figure \ref{fig:time}, GARD  appears to have the lower computational cost among its competitors, for outlier fraction less than $20\%$.

\subsection{Complexity evaluation for large data sets}
\label{sec:larg_deimension}
In the current section, we have attempted to evaluate the performance of the most computationally efficient methods, in the case where the number of generate data grows significantly, compared to the dimension of the unknown vector $\boldsymbol{\theta}_0$. Comparison is performed for all methods except from ADMM and ROMP. As presented in table \ref{tab:complexity}, the ADMM algorithm does not handle efficiently large numbers of samples. On the other hand, although ROMP performs exactly as M-est, that comes at a higher computational cost, therefore seemed impractical to put it to test. 
\begin{figure}
\begin{center}
\includegraphics[scale=0.6]{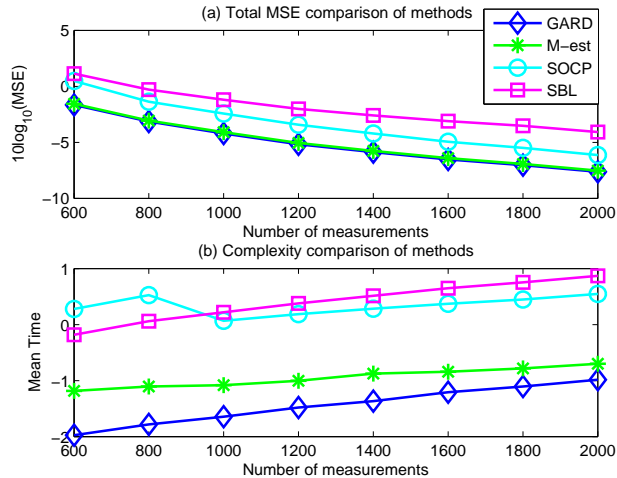}
\caption{Large scale complexity test for dimension of the unknown vector set at $m=100$. While varying the number of measurements, the MSE (a) and the Mean time until convergence (b), is shown for each method. It is clear that GARD attains the lowest MSE, whilst being the most efficient.}
\label{fig:large_data_sets}
\end{center}
\end{figure}
Once again, equation \eqref{eq:probdefmat2} has been used to generate our data. The dimension of $\boldsymbol{\theta}_0$ is set at $m=100$, the density of the outlier noise is $10 \%$  with values $\pm 25$ spread uniformly over $n$ coordinates and finally the inlier noise vector has elements  drawn from the standard Gaussian distribution, with $\sigma=1$ and inlier noise bound $\epsilon_0$, assumed to be known. For each number of measurements $n$, $100$ independent experiments have been performed, while varying the inlier and outlier noise; results have been averaged.

In figure \ref{fig:large_data_sets} (b), we have evaluated the mean implementation time (in logarithmic scale for precision) for each method, while at the same time the total MSE is measured, for each varying size of $n$. It is clear, that even for significantly large $n$, i.e., large number of measurements, GARD excels. Whilst attaining the lowest MSE, the mean time until convergence, is the lowest.

\subsection{Support recovery test}
\label{sec:supporttest}
This section attends to bridge the gap, between the theoretical results properties in section \ref{sec:theoretic} and the experimental performance of GARD. The results of section \ref{sec:MSEest}, showcase the performance of GARD. However, it would be incorrect to conclude that the support of the sparse outlier vector is correctly recovered, in cases where the algorithm attains a low MSE, a matter that we would like to address here. Although, the recovery of the sparse outlier support is desirable, since it guarantees the smallest MSE possible, it should be noted that GARD performs well (with respect to the MSE), even in cases where the recovery of the support is not exact; e.g., one of the most common cases is to identify a few extra indices (that do not belong to the support of $\boldsymbol{u}_0$) as outlying elements.\\
For all support recovery tests, we have set the dimension of the unknown vector $\boldsymbol{\theta}_0$, at $m=100$ and corrupted the original data with outliers in $s<n$ indices, uniformly sampled over $n=600$ measurements. Also, for each fraction of outliers, i.e.,  $(s/n)\cdot 100\%$, we performed $10000$ Monte Carlo runs.\\
Let $S_k$ denote the support set of the sparse estimate $\boldsymbol{u}_*$ and $S$ the support set of the sparse outlier vector $\boldsymbol{u}_0$. The green line corresponds to the percentage of correct indices the proposed scheme has recovered, i.e., indices $i \in S_k \subseteq S$, while the orange line corresponds to the extra indices that the method has identified as outliers, i.e., indices $j \in J\setminus S$. In parallel, since the smallest principal angle cannot be computed directly, we have tracked the bound of $c>\delta_s$ for evaluation of the theoretical results proposed in section \ref{subsec:onlyoutliers}. The vertical line, corresponds to the largest outlier fraction, that the proposed scheme succeeds in recovering the sparse outlier vector support, one to one.
\subsubsection{The presence of outliers only}
\label{subsub:only_outliers}
The scenario in which our original data is corrupted with outlier values only, is treated separately. Our data are generated via equation \eqref{eq:probdefmat2}, for $\boldsymbol{
\eta}=\boldsymbol{0}$ and outlier values\footnote{In the noiseless case, arbitrarily small values, are treated as outliers; thus the performance of GARD is not affected by a particular selection of those values.} $\pm 25$, in $s$ indices, uniformly sampled over $n$ coordinates.
\begin{figure}
\begin{center}
\includegraphics[scale=0.6]{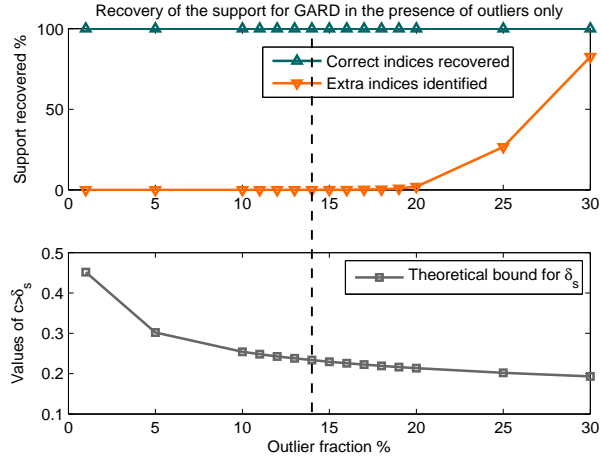}
\caption{Recovery of the support and relation to the bound of $\delta_s$, for the noiseless case. For outlier fraction of less than $14\%$, the bound for $\delta_s$ \eqref{eq:delta_bound_definition}, is guaranteed, hence the recovery is exact.}
\label{fig:sup_rec_outliers_only}
\end{center}
\end{figure}
In figure \ref{fig:sup_rec_outliers_only}, the recovery of the exact support versus fraction of outliers is demonstrated. It is clear that for fraction of less than $14\%$, the bound for $\delta_s$ as Theorem \ref{theor:exact_rec} proposes, is guaranteed, thus the recovery of the support is exact and also the approximation of $\boldsymbol{\theta}_0$ is of zero error. Noteworthy is also the fact that the approximation error is also zero, in cases where only a few extra indices that belong to $S^{c}$, are imported into the support set $S_k$. 
\subsubsection{The presence of both inlier and outlier noise}
\label{subsub:inlier&outliers}
In the current section, we have worked towards the empirical validation of \eqref{eq:delta_bound_definition_epsilon} and \eqref{eq:ds_approx_er}, where two separate tests have been performed. Equation \eqref{eq:probdefmat2}, has been also employed here to generate our input data. 

In the first test, we have fixed the maximum bound for the norm of the inlier noise vector at $\epsilon_0=28$, while we have altered the fraction of outliers. In order to achieve this, we have used Matlab's random generator for the Gaussian distribution with standard deviation depending on $\epsilon_0$, while we have cut off the largest elements (in the absolute sense) when it was required, so that the norm of the inlier noise vector always remains bounded by $\epsilon_0.$ Also, recall on remark \ref{rem:bound_of_outl_e0}, that the minimum element of the absolute value of the outlier vector should be grater that $(2+\sqrt{6})\epsilon_0$, in order \eqref{eq:delta_bound_definition_epsilon} to be valid. Thus, outlier values have been set at $\pm150$, while the values of $\boldsymbol{y}_0=\mathbf{X}\boldsymbol{\theta}_0$, range at $170-180$.

In figure \ref{fig:supportrec_noisy}, we have plotted the recovery of the support for GARD and its relation to the bound $c$ of the smallest principal angle $\delta_s$, for each outlier fraction. As one could observe, for fraction of outliers less than $13\%$, the bound for $\delta_s$ as Theorem \ref{theor:exact_support} proposes is guaranteed, thus the recovery of the support is exact. In parallel, we have computed the MSE between $\boldsymbol{\theta}_0$ and $\boldsymbol{\theta}_*$ and tracked the relation to the theoretical bound\footnote{Since the MSE is a squared norm between $\boldsymbol{\theta}$ and $\boldsymbol{\theta}_*$, the bound is the squared right hand side of \eqref{eq:ds_approx_er}.} of \eqref{eq:ds_approx_er}.

In the second test, the ability of GARD to deal with heavy loads of noise, is demonstrated. The outlier values were set at $\pm150$ and the bound of $\epsilon_0$ was increased, so that the inlier noise corresponds to noise of 20 dB. In such a case,  the bounds established in \eqref{eq:delta_bound_definition_epsilon} and \eqref{eq:ds_approx_er} are violated, however GARD manages to do well. In figure \ref{fig:support_rec_er2}, the recovery of the support versus the outlier fraction is demonstrated. We conclude, that although the method does not succeed to recover the sparse outlier support $100\%$, the MSE is relatively low, at least for low fraction of outliers, i.e., below $10\%$. It should be noted that the MSE value close to 5 is not high, compared to the MSE measured in figure \ref{fig:supportrec_noisy}, which was close to 1.
\begin{figure}
\begin{center}
\includegraphics[scale=0.6]{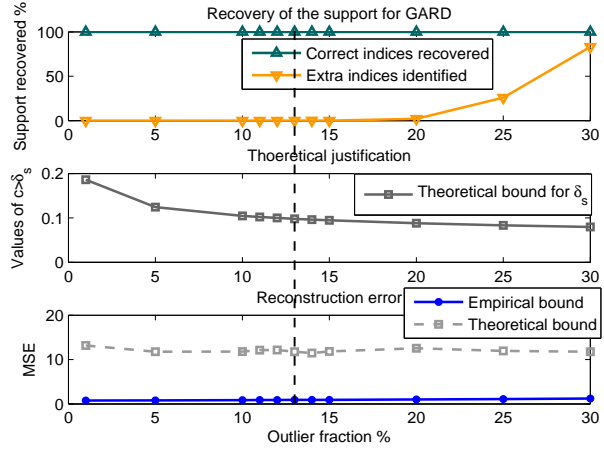}\caption{Recovery of the support and relation to the bound of $\delta_s$, for the case inlier and outlier noise coexist. For outlier fraction of less than $13\%$, the bound for $\delta_s$ \eqref{eq:delta_bound_definition}, is guaranteed, hence the recovery of the support is exact, while the MSE computed is valid under the bound that inequality \eqref{eq:ds_approx_er} suggests.}
\label{fig:supportrec_noisy}
\end{center}
\end{figure}

\begin{figure}
\begin{center}
\includegraphics[scale=0.6]{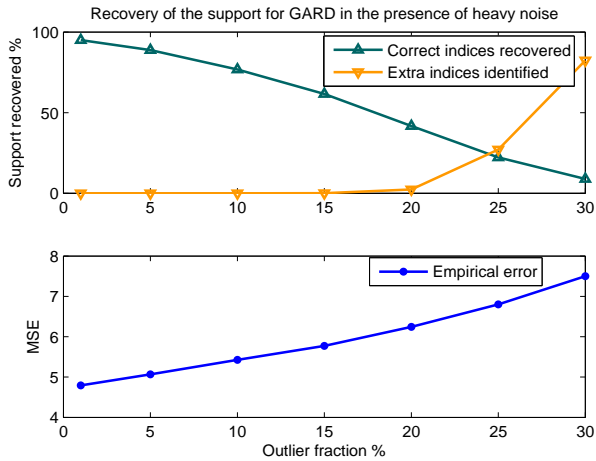}\caption{Recovery of the support for GARD, in the case where outlier and heavy inlier noise of approx. 20 dB coexist. Although, the support is not entirely recovered, the MSE is relatively low.}
\label{fig:support_rec_er2}
\end{center}
\end{figure}

\subsection{Phase transition curves}
\label{sec:phasetrans}
In the final experiment, we have tested whether each method succeeds to reach an estimate or not. In particular, we computed the probability for which each method succeeds in recovering the solution. Since ADMM and ROMP have a higher computational cost, as well as similar to other methods performance, we limited our efforts to test the other three methods, i.e., GARD, SOCP, M-est and SBL.

Figure \ref{fig:phasetrans} (a) shows the probability of recovery for each method tested, while varying the fraction of outliers. The dimension of the unknown vector $\boldsymbol\theta_{0}$ is $m=100$. For each density of the sparse outlier vector, we have computed the probability over $200$ Monte Carlo runs. For each method, we have assumed that the solution is found, if $|| \boldsymbol\theta_{*} - \boldsymbol{\theta}_0 ||_2/ ||\boldsymbol\theta_{0}||_2 \leq 0.03$. The major result, is that for fraction of outliers under $25\%$, GARD succeeds in recovering the solution, with probability 1. For (M-est), the percentage is below $20\%$, while for the rest $\ell_1$ minimization methods the percentage is even lower. For SBL, the probability to recover the solution is not guaranteed, even for the lowest fractions of outliers.

\begin{figure}
\centering
\begin{subfigure}[b]{0.5\textwidth}
  \includegraphics[scale=0.5]{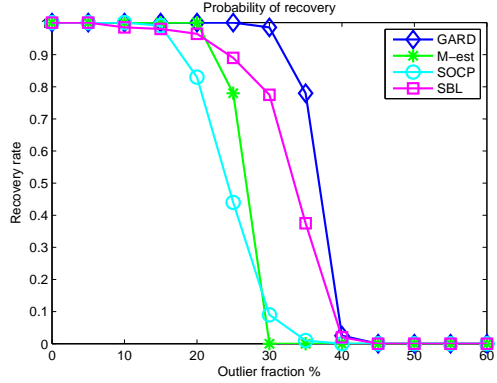}
\caption{Probability of recovery for dimension of the unknown vector $m=100$.}
\end{subfigure}
\quad
\begin{subfigure}[b]{0.5\textwidth}
  \includegraphics[scale=0.5]{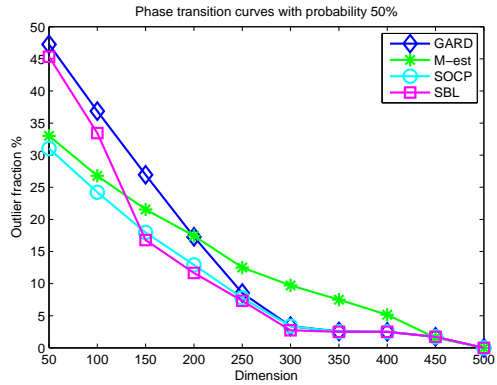}\caption{Phase Transition curves.}
\end{subfigure}
\caption{(a) Probability of recovery while varying the fraction of outliers. (b) Transition from success to failure of each method with probability $50\%$.}
\label{fig:phasetrans}
\end{figure}
Figure \ref{fig:phasetrans} (b), shows the phase transition curves for each method. For each dimension of the unknown vector $\boldsymbol\theta_{0}$, we have computed the fraction of outliers for which the method transits from success to failure with probability $50\%$. Experiments were carried over $200$ Monte Carlo runs. Once again, we have assumed that the solution is found, using the criterion as in Figure \ref{fig:phasetrans} (a). We can observe that for each fixed dimension of the unknown vector the probability for each method to recover the solution (always within a given tolerance), increases for fraction of outliers below each phase transition curve, while the probability decreases as we move above the phase transition curves. Here, it is clear that up to $m=200$, GARD succeeds to recover the solution with the highest probability from all the rest of the methods. However, for greater dimension of the unknown, the measurement dimension (here $n=600$) seems pretty ``poor" to allow GARD to preserve the performance (in the sense that more data is required), although it does not drop below the $\ell_1$ minimization techniques.

%

\subsection{Experiments with general noise forms}
\label{sec:pdfs}
In the current section we performed a set of more realistic experiments for the methods described and measured the MSE over an average of 100 Monte-Carlo runs. Equation \eqref{eq:probdefmat}, describes our model, where we produced ($n=600$) measurements corrupted with different types of noise and measured the MSE. The dimension of the unknown vector $\boldsymbol{\theta}_0$ is $m=100$. For all tests, the ADMM was excluded from the last set of experiments, since the method proved weak to handle different orders of noise values, thus failed to converge for all tests.
\begin{itemize}
\item \textbf{Tests A, B and C}. The noise vector is drawn from 
the L\' evy alpha-stable distribution, $\mathcal{S}(\alpha, \beta,\gamma,\delta)$, with pdf expressed in closed form only for special cases of the parameters defined. The distribution's parameters $\beta$ and $\delta$ that control symmetry, were set to zero (results to a symmetric distribution without skewness) for all three experiments. For A, the distribution's parameters were set to $\alpha=0.45$ and $\gamma=0.3$; the parameters for each method were set to $\epsilon_0=3$ for GARD and SOCP, $\sigma=1.2$ for M-est and ROMP, while the hyperparameters for SBL were initialized to $10^{-4}.$ In table \ref{tab:4test}, it can be seen that almost all methods perform quite well (MSE is low), with GARD appearing to perform better. For test B, $\alpha=0.4$, $\gamma=0.1$; for GARD $\epsilon_0=3$, for SOCP $\epsilon_0=2$, for M-est and ROMP $\sigma=1$, while for SBL the hyperparameters were initialized at random (Gaussian) with variance equal to $10^{-5}, $ although fails to converge, for all values of the paramaters tested. Once again, it can be readily seen that GARD attains the lowest MSE. Finally, for experiment C, $\alpha=0.3$, $\gamma=0.1$, resulting to more large values of noise; for GARD $\epsilon_0=3$, for SOCP $\epsilon_0=2$, for M-est and ROMP $\sigma=1$, while for SBL the hyperparameters were initialized at random (Gaussian) with variance equal to $10^{-6}$. In table \ref{tab:4test}, the MSE for GARD is significant lower than tests A and B, which means that the method identifies correctly the outlier values, regardless how large those are; the rest of the methods fail to provide a descent estimate for the unknown vector.

\item \textbf{Test D}. The noise consists of a sum of two vectors, drawn from 2 independent Gaussian distributions $\mathcal{N}(0,0.6^2)$ and $\mathcal{N}(0,0.8^2)$, plus an outlier noise vector of $10\%$ density (indices chosen uniformly at each repetition) with values $\pm 25$. The parameters required for each method are: the default tuning parameter for both M-est and ROMP; for GARD and SOCP $\max \{ \epsilon_1, \epsilon_2 \}$ is required, where $\epsilon_1,\epsilon_2$ are the bounds of each inlier noise vector, while for SBL an initialization at random with variance of $10^{-6}$ was performed. The model of the noise is now more complicated, hence the problem harder to solve for all methods. Once again, it is clear that GARD succeeds in handling this mixed type of noise too.
\end{itemize}
\begin{table}
\begin{center}
\begin{tabular}{|c|c|c|c|c|}
\hline
Algorithm & Test-A & Test-B & Test-C & Test-D \\ \hline
GARD &  0.1772  &  0.0180  & 0.0586   &  0.690  \\ \hline
M-est& 0.2248  &  0.2859  &  1.844e+06  &  0.704  \\ \hline
SOCP &  0.4990  & 0.3502   & 5.852e+05   & 1.011   \\ \hline
SBL  &  0.9859  & 58.3489   & 2.165e+06   & 1.292   \\ \hline
ROMP & 0.2248   & 0.2859   &  1.844e+06  &  0.704   \\ \hline
\end{tabular}
\end{center}
\caption{Computed MSE, for various experiments. In tests A,B and C, the noise is drawn from the Heavy-tailed distribution alpha-stable of Levy distribution. In test D, noise consists of a sum of two vectors, drawn from 2 independent Gaussian distributions with different variance, plus an outlier noise vector of impulsive noise.}
\label{tab:4test}
\end{table}

\section{Conclusions}\label{SEC:CONCL}
A novel algorithmic scheme, i.e., GARD, for robust linear regression has been developed. GARD alternates between an OMP selection step, which identifies the outliers, and a Least-Squares estimator, that attempts to fit the data. Several properties regarding convergence, error bounds and uniqueness of the solution have been derived. Furthermore, more theoretical results concerning the stability of the method and the recovery of the outliers' support have been extracted.

The proposed scheme has been compared with other well established techniques through extensive simulations. The experiments suggest that GARD has an overall tolerance in outliers compared to its competitors. Moreover, it attains the lowest error for the estimation of the unknown vector, along with M-est and ROMP; moreover, GARD attains similar MSE at lower complexity.

Finally, the experiments verify that our greedy-based GARD algorithm outperforms the $\ell_1$ norm-based  schemes, for low sparsity levels; since in practical applications outliers are expected to be just a few, greedy-based techniques seem to be the right choice.


%
\appendix 

\section{Proof of Theorem \ref{theor:exact_rec}}
\label{appendix:A}
Since matrix $\mathbf{X}$ is assumed to be full rank, according to the analysis presented in section \ref{subsec:onlyoutliers}, equation $\boldsymbol{y}=\mathbf{X}\boldsymbol{\theta}_0 + \boldsymbol{u}_0$ could be transformed into equation \eqref{eq:uniq_dec}. Thus, the principal angle defined in \eqref{eq:principalangle2}, is now involved.

Rather than delving into the main arguments of the proof, we need to establish the following propositions.
\begin{proposition}
Let $\mathbf{Q}$ be the orthonormal matrix of the reduced \textit{QR} decomposition of the full rank matrix $\mathbf{X}$ and $\delta_s$ the principal angle between the subspace spanned by $\SPAN(\mathbf{Q})$ and the subspace spanned by all the s-dimensional outlier subspaces. Then
\begin{equation}
||\mathbf{Q}^T \boldsymbol{v} ||_2 \leq \delta_s || \boldsymbol{v} ||_2
\label{eq:vec_bound}
\end{equation}
holds for every vector $\boldsymbol{v} \in \Real^n$ with $||\boldsymbol{v}||_0 \leq s$.
\label{prop:vec_bound}
\end{proposition}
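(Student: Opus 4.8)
The plan is to reduce the claim to the principal-angle inequality \eqref{eq:principalangle3}, which already bounds $|\langle \boldsymbol{w},\boldsymbol{u}\rangle|$ for any $\boldsymbol{w}\in\SPAN(\mathbf{Q})$ and any at most $s$-sparse vector $\boldsymbol{u}$. The quantity we must control, $\|\mathbf{Q}^T\boldsymbol{v}\|_2$, lives in $\Real^m$, so the first step is to re-express it on the measurement space $\Real^n$. Since the columns of $\mathbf{Q}$ are orthonormal, $\mathbf{Q}$ preserves the Euclidean norm, whence $\|\mathbf{Q}^T\boldsymbol{v}\|_2 = \|\mathbf{Q}\mathbf{Q}^T\boldsymbol{v}\|_2$. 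Denoting by $P=\mathbf{Q}\mathbf{Q}^T$ the orthogonal projection onto $\SPAN(\mathbf{Q})$ and setting $\boldsymbol{w}=P\boldsymbol{v}\in\SPAN(\mathbf{Q})$, the goal becomes $\|\boldsymbol{w}\|_2\leq\delta_s\|\boldsymbol{v}\|_2$.

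The second step exploits that $P$ is idempotent and self-adjoint. Writing $\|\boldsymbol{w}\|_2^2=\langle P\boldsymbol{v},P\boldsymbol{v}\rangle=\langle \boldsymbol{v},P\boldsymbol{v}\rangle=\langle\boldsymbol{v},\boldsymbol{w}\rangle$, we have expressed the squared norm of $\boldsymbol{w}$ as an inner product between the $s$-sparse vector $\boldsymbol{v}$ (recall $\|\boldsymbol{v}\|_0\leq s$) and the vector $\boldsymbol{w}\in\SPAN(\mathbf{Q})$ --- precisely the pairing to which \eqref{eq:principalangle3} applies. Invoking that inequality, with $\boldsymbol{w}$ in the role of the regressor-subspace vector and $\boldsymbol{v}$ as the sparse vector, yields $\|\boldsymbol{w}\|_2^2=|\langle\boldsymbol{w},\boldsymbol{v}\rangle|\leq\delta_s\|\boldsymbol{w}\|_2\|\boldsymbol{v}\|_2$.

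The last step is to cancel one factor of $\|\boldsymbol{w}\|_2$: if $\boldsymbol{w}=\mathbf{0}$ the asserted bound holds trivially, and otherwise dividing by $\|\boldsymbol{w}\|_2>0$ gives $\|\boldsymbol{w}\|_2\leq\delta_s\|\boldsymbol{v}\|_2$, which by the norm-preservation identity of the first step is exactly \eqref{eq:vec_bound}. I do not expect a genuine obstacle here; the only conceptual point --- and the step most easily overlooked --- is the passage from $\|\mathbf{Q}^T\boldsymbol{v}\|_2$ to the self-inner-product $\langle\boldsymbol{v},\boldsymbol{w}\rangle$, since this is what converts a norm living in $\Real^m$ into the $\SPAN(\mathbf{Q})$-versus-sparse pairing that Definition \ref{DEF:pa} and \eqref{eq:principalangle3} are built to handle. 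Everything else is a direct substitution.
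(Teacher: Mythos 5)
Your proof is correct and follows essentially the same route as the paper's: both identify $\|\mathbf{Q}^T\boldsymbol{v}\|_2^2$ with the inner product $\langle\boldsymbol{v},\mathbf{Q}\mathbf{Q}^T\boldsymbol{v}\rangle$, apply \eqref{eq:principalangle3} to the pair $\mathbf{Q}\mathbf{Q}^T\boldsymbol{v}\in\SPAN(\mathbf{Q})$ and the $s$-sparse $\boldsymbol{v}$, and cancel a factor of the norm. The only (cosmetic) differences are that you use the exact identity $\|\mathbf{Q}\mathbf{Q}^T\boldsymbol{v}\|_2=\|\mathbf{Q}^T\boldsymbol{v}\|_2$ where the paper uses the submultiplicative bound $\|\mathbf{Q}\|_2\|\mathbf{Q}^T\boldsymbol{v}\|_2$, and that you explicitly dispose of the degenerate case $\mathbf{Q}^T\boldsymbol{v}=\mathbf{0}$ before dividing.
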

\begin{proof}
The proof is straightforward by the definition of $\delta_s$ and equation \eqref{eq:principalangle3}:
\begin{align*}
\|\mathbf{Q}^T \boldsymbol{v} \|_2^2 &=| \langle \boldsymbol{v},\mathbf{QQ}^T \boldsymbol{v} \rangle |\leq \delta_s \|  \boldsymbol{v}\|_2  \| \mathbf{Q}\mathbf{Q}^T \boldsymbol{v}\|_2\\
 &\leq \delta_s \|  \boldsymbol{v}\|_2  \| \mathbf{Q}\|_2\|\mathbf{Q}^T \boldsymbol{v}\|_2 = \delta_s \|  \boldsymbol{v}\|_2  \|\mathbf{Q}^T \boldsymbol{v}\|_2,
\end{align*}
which leads to \eqref{eq:vec_bound}.
\end{proof}
\begin{lemma}
Let the assumptions of Proposition \ref{prop:vec_bound} be satisfied and $S$ be any non-empty subset of $J=1\textrm{---}n$ with cardinality $|S|=k\leq s<n$. Then
\begin{equation}
||\mathbf{Q}^T \mathbf{I}_S ||_2 \leq \delta_s
\label{eq:norm_bound}
\end{equation}
holds for every such set $S$.
\label{lem:norm_bound}
\end{lemma}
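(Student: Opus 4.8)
The plan is to reduce this operator-norm inequality to the vector inequality already secured in Proposition \ref{prop:vec_bound}. First I would rewrite the spectral norm as a supremum over the lower-dimensional unit sphere: since $\mathbf{Q}^T\mathbf{I}_S$ acts on vectors $\boldsymbol{x}\in\Real^k$, we have $\|\mathbf{Q}^T\mathbf{I}_S\|_2=\sup_{\|\boldsymbol{x}\|_2=1}\|\mathbf{Q}^T\mathbf{I}_S\boldsymbol{x}\|_2$. The task then becomes to bound $\|\mathbf{Q}^T\mathbf{I}_S\boldsymbol{x}\|_2$ for an arbitrary unit vector $\boldsymbol{x}$.

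The two observations that make the reduction work both concern the map $\mathbf{I}_S$. For any $\boldsymbol{x}\in\Real^k$, the vector $\boldsymbol{v}:=\mathbf{I}_S\boldsymbol{x}\in\Real^n$ is supported on $S$, so that $\|\boldsymbol{v}\|_0\le|S|=k\le s$; hence $\boldsymbol{v}$ is an admissible argument for Proposition \ref{prop:vec_bound}. Moreover, because the columns of $\mathbf{I}_S$ are distinct vectors of the standard basis, $\mathbf{I}_S$ is an isometry onto its range, so $\|\boldsymbol{v}\|_2=\|\mathbf{I}_S\boldsymbol{x}\|_2=\|\boldsymbol{x}\|_2$.

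With these in hand, I would invoke Proposition \ref{prop:vec_bound} on $\boldsymbol{v}$ to obtain $\|\mathbf{Q}^T\mathbf{I}_S\boldsymbol{x}\|_2=\|\mathbf{Q}^T\boldsymbol{v}\|_2\le\delta_s\|\boldsymbol{v}\|_2=\delta_s\|\boldsymbol{x}\|_2=\delta_s$. Taking the supremum over all unit $\boldsymbol{x}$ yields \eqref{eq:norm_bound}. The bound holds uniformly over every qualifying set $S$, because $\delta_s$ is defined in \eqref{EQ:principal_angle_final} as the maximum of $\delta_S$ over all index sets of cardinality at most $s$; thus the single constant $\delta_s$ dominates each individual $S$ simultaneously.

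There is no real analytic obstacle here; the only point requiring care is the bookkeeping around $\mathbf{I}_S$ --- namely verifying that left-multiplication by $\mathbf{I}_S$ both preserves Euclidean length and produces an at-most-$s$-sparse vector, which is exactly what funnels the problem into the hypotheses of Proposition \ref{prop:vec_bound}. Everything else is the standard characterization of the spectral norm as a supremum of Euclidean norms over the unit sphere.
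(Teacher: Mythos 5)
Your proposal is correct and follows essentially the same route as the paper's own proof: both pass an arbitrary vector of $\Real^k$ through $\mathbf{I}_S$ to obtain an at-most-$s$-sparse vector of equal Euclidean norm, apply Proposition \ref{prop:vec_bound}, and conclude via the definition of the matrix $2$-norm. Nothing is missing.
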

\begin{proof}
Let $\undertilde{\boldsymbol{v}} \neq \boldsymbol{0}$ be a vector of $\Real^k$, $k\leq s$. It is clear that $\mathbf{I}_S \undertilde{\boldsymbol{v}}= \boldsymbol{v} \in \Real^n$, with $|| \boldsymbol{v}||_0\leq s$ and $|| \undertilde{\boldsymbol{v}}||_2 = ||\boldsymbol{v} ||_2$.
Thus, for all $\undertilde{\boldsymbol{v}}\in\Real^k$ we have
\[
||\mathbf{Q}^T\mathbf{I}_S \undertilde{\boldsymbol{v}}||_2 = ||\mathbf{Q}^T \boldsymbol{v}||_2 \leq \delta_s ||\undertilde{\boldsymbol{v}}||_2,
\]
due to Proposition \ref{prop:vec_bound}.
The result follows from the definition of the matrix $2-$norm.
\end{proof}
The importance of Lemma \ref{lem:norm_bound} is twofold. First of all, it is a bound on the 2-norm of the matrix $\mathbf{Q}^T \mathbf{I}_S $. Moreover, since $||\mathbf{I}_S^T\mathbf{Q} \mathbf{Q}^T \mathbf{I}_S ||_2=|| \mathbf{Q}^T \mathbf{I}_S||_2^2$ and assuming that \eqref{eq:delta_bound_definition} holds, we have that
\begin{equation}
||\mathbf{I}_S^T\mathbf{Q} \mathbf{Q}^T \mathbf{I}_S ||_2\leq \delta_s^2 <1/2,
\label{in:usual_bound_byhalf}
\end{equation}
which leads to the fact that matrix $\mathbf{M}_k=\mathbf{I}_k-\mathbf{I}_S^T\mathbf{Q} \mathbf{Q}^T \mathbf{I}_S$ is invertible for all $S$ with $|S|=k\leq s$ and
\begin{equation}
|| \mathbf{M}_k^{-1}||_2\leq (1-||\mathbf{I}_S^T\mathbf{Q} \mathbf{Q}^T \mathbf{I}_S ||_2)^{-1}<2,
\label{eq:norm<2}
\end{equation}
due to a very popular Proposition of linear algebra.

\begin{lemma}
Let the assumptions of Lemma \ref{lem:norm_bound} be satisfied. Then
\begin{equation}
||\mathbf{I}_{S}^T\mathbf{QQ}^T \boldsymbol{v} ||_2 \leq \delta_s^2 || \boldsymbol{v} ||_2
\label{eq:vec_bound_fin}
\end{equation}
holds for every vector $\boldsymbol{v} \in \Real^n$, with $||\boldsymbol{v} ||_0\leq s$.
\label{lem:vec_bound_fin}
\end{lemma}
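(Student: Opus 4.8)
The plan is to factor the operator $\mathbf{I}_S^T\mathbf{Q}\mathbf{Q}^T$ into two pieces, each of which has already been bounded in the preceding results. First I would write
\[
\mathbf{I}_{S}^T\mathbf{Q}\mathbf{Q}^T \boldsymbol{v} = \left(\mathbf{I}_{S}^T\mathbf{Q}\right)\left(\mathbf{Q}^T \boldsymbol{v}\right),
\]
viewing the left factor $\mathbf{I}_{S}^T\mathbf{Q}$ as a $k\times m$ matrix acting on the $m$-vector $\mathbf{Q}^T\boldsymbol{v}$. Submultiplicativity of the induced matrix $2$-norm then gives immediately
\[
\|\mathbf{I}_{S}^T\mathbf{Q}\mathbf{Q}^T \boldsymbol{v}\|_2 \leq \|\mathbf{I}_{S}^T\mathbf{Q}\|_2\,\|\mathbf{Q}^T \boldsymbol{v}\|_2,
\]
which reduces the task to bounding the two factors independently.

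Next I would bound each factor in turn. Since the matrix $2$-norm is invariant under transposition, $\|\mathbf{I}_{S}^T\mathbf{Q}\|_2 = \|\mathbf{Q}^T\mathbf{I}_{S}\|_2$, and Lemma \ref{lem:norm_bound} (specifically \eqref{eq:norm_bound}) bounds the latter by $\delta_s$. For the vector factor, the hypothesis $\|\boldsymbol{v}\|_0\leq s$ is precisely what is needed to invoke Proposition \ref{prop:vec_bound}, which yields $\|\mathbf{Q}^T\boldsymbol{v}\|_2\leq \delta_s\|\boldsymbol{v}\|_2$ through \eqref{eq:vec_bound}. Multiplying the two estimates produces exactly $\|\mathbf{I}_{S}^T\mathbf{Q}\mathbf{Q}^T \boldsymbol{v}\|_2 \leq \delta_s^2\|\boldsymbol{v}\|_2$, which is the claimed inequality \eqref{eq:vec_bound_fin}.

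There is essentially no genuine obstacle here: once the factorization is written down, the statement is an immediate consequence of the two earlier results together with the submultiplicativity and transpose-invariance of the spectral norm. The only subtlety worth flagging is the bookkeeping of which hypothesis feeds which factor: the left factor relies only on $|S|=k\leq s$ (the standing assumption of Lemma \ref{lem:norm_bound}), whereas the sparsity condition $\|\boldsymbol{v}\|_0\leq s$ is used solely to apply Proposition \ref{prop:vec_bound} to the right factor. An alternative route would expand $\|\mathbf{I}_{S}^T\mathbf{Q}\mathbf{Q}^T\boldsymbol{v}\|_2^2$ as the inner product $\langle \mathbf{Q}\mathbf{Q}^T\boldsymbol{v},\,\mathbf{I}_{S}\mathbf{I}_{S}^T\mathbf{Q}\mathbf{Q}^T\boldsymbol{v}\rangle$ and invoke \eqref{eq:principalangle3} directly, but that approach introduces a self-referential cancellation of the target quantity and an extra degenerate case when the left-hand side vanishes, so the factorization above is the cleaner path.
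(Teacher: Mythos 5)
Your proof is correct and follows essentially the same route as the paper: both arguments factor the operator, use submultiplicativity and transpose-invariance of the spectral norm, and bound the left factor by $\delta_s$ via Lemma \ref{lem:norm_bound}. The only (immaterial) difference is that the paper writes $\boldsymbol{v}=\mathbf{I}_{S'}\boldsymbol{v}_{S'}$ for the support $S'$ of $\boldsymbol{v}$ and applies Lemma \ref{lem:norm_bound} a second time to $\mathbf{Q}^T\mathbf{I}_{S'}$, whereas you apply Proposition \ref{prop:vec_bound} directly to $\mathbf{Q}^T\boldsymbol{v}$ — the same estimate, since Lemma \ref{lem:norm_bound} is itself derived from that proposition.
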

\begin{proof}
The tricky part of the proof is that the s-sparse vector $\boldsymbol{v} \in \Real^n$, might not necessarily share the same support set $S$. Let $S'$ the support set of vector $\boldsymbol{v}$, with $|S'|=k \leq s$. Thus, using $\boldsymbol{v}_{S'} \in \Real^k$ to denote the non-sparse vector (also notice that $\|\boldsymbol{v}\|_2 =\| \boldsymbol{v}_{S'}\|_2$), we have $\boldsymbol{v}=\mathbf{I}_{S'}\boldsymbol{v}_{S'}$. Hence, due to the sub-multiplicative property of the matrix 2-norm, we have
\begin{align*}
\|\mathbf{I}_{S}^T\mathbf{QQ}^T \boldsymbol{v} \|_2  &=  \|\mathbf{I}_{S}^T\mathbf{QQ}^T \mathbf{I}_{S'}\boldsymbol{v}_{S'} \|_2\\
& \leq  \|\mathbf{I}_{S}^T\mathbf{Q}\|_2 \|\mathbf{Q}^T \mathbf{I}_{S'}\|_2\|\boldsymbol{v}_{S'}\|_2  \leq \delta_s^2 ||\boldsymbol{v} ||_2,
\end{align*}
where we have used $||\mathbf{I}_{S}^T\mathbf{Q}||_2=||\mathbf{Q}^T\mathbf{I}_{S}||_2$ and both the results of Proposition \ref{prop:vec_bound} and Lemma \ref{lem:norm_bound}.
\end{proof}

\noindent\textbf{Proof of the Main Theorem}
\begin{proof}
Let $S=\supp(\boldsymbol{u}_0)$. At the initial step of (GARD), the Least Squares solution over the active columns of matrix $\mathbf{\Phi}=[\mathbf{Q}\ \mathbf{I}_n]$ is computed, i.e., columns vectors of matrix $\mathbf{Q}$. The initial residual is $\boldsymbol{r}^{(0)}=\boldsymbol{y}-\mathbf{Q}\boldsymbol{w}_*^{(0)}.$ At this point, we could express the residual in terms of the projection matrix $\mathbf{P_Q}$ onto the range of matrix $\mathbf{Q}$ \footnote{Take into account that $\mathbf{Q}$ is orthonormal.}. Thus, $\mathbf{P_Q}=\mathbf{QQ}^{T}$ and the residual could be written as
\[
\boldsymbol{r}^{(0)}=(\mathbf{I}_n- \mathbf{P_Q})\boldsymbol{y}=(\mathbf{I}_n- \mathbf{P_Q})\boldsymbol{u}_0,
\]
as  $\boldsymbol{y}=\mathbf{Q}\boldsymbol{w}_0+\boldsymbol{u}_0$, $\mathbf{I}_n- \mathbf{P_Q}$ is the projector for the subspace complementary to that of $\mathcal{R}(\mathbf{Q})$ and  $(\mathbf{I}_n- \mathbf{P_Q})\mathbf{Q}\boldsymbol{w}_0=\mathbf{0}$.

At the first step, in order to ensure a selection from the correct support $S$, we impose that
\begin{equation}
|r^{(0)}(i)|>|r^{(0)}(j)|,\ \forall\ i \in S\ \text{and}\ j \in S^c.
\label{eq:guarantee}
\end{equation}
The basic concept of the proof is to obtain lower and upper bounds for the left and right part of equation \eqref{eq:guarantee}. Employing Lemma \ref{lem:vec_bound_fin}, the left part is bounded below by
\begin{align}
|&r^{(0)}(i)| =|  \langle \boldsymbol{r}^{(0)}, \boldsymbol{e}_i \rangle |= | \langle \boldsymbol{u}_0 - \mathbf{QQ}^{T}\boldsymbol{u}_0, \boldsymbol{e}_i \rangle  |\geq  \nonumber \\
&\geq |u_i | - |\langle \mathbf{QQ}^{T}\boldsymbol{u}_0, \boldsymbol{e}_i \rangle | = |u_i | - |\boldsymbol{e}_i^T \mathbf{QQ}^{T}\boldsymbol{u}_0 |\nonumber\\
&\geq \min_{l\in S}|u_l| - \delta_s^2 ||\boldsymbol{u}_0||_2,
\label{eq:guarantee_bound_left}
\end{align}
where $u_i$ are the elements of $\boldsymbol{u}_0$.

Following similar steps, the right part is upper bounded by
\begin{align}
|r^{(0)}(j)| &=|  \langle \boldsymbol{r}^{(0)}, \boldsymbol{e}_j \rangle |= | \langle \boldsymbol{u}_0 - \mathbf{QQ}^{T}\boldsymbol{u}_0, \boldsymbol{e}_j \rangle  |= \nonumber \\
&= |\boldsymbol{e}_j ^T \mathbf{QQ}^{T}\boldsymbol{u}_0,  | \leq \delta_s^2 ||\boldsymbol{u}_0||_2,
\label{eq:guarantee_bound_right}
\end{align}
using that $\langle \boldsymbol{u}_0, \boldsymbol{e}_j \rangle =0$, since $j\in S^c.$

Hence, if we impose
\[
\min_{l\in S}|u_l| - \delta_s^2 ||\boldsymbol{u}_0||_2 > \delta_s^2 ||\boldsymbol{u}_0||_2,
\]
condition \eqref{eq:guarantee} is guaranteed and one of the correct columns, i.e., $i_1$, is bound to be selected at the first step (note, that it is not guaranteed that the largest outlier value will be selected first).

Considering $S_1=\{j_1\}\subset S$, the matrix of active columns is augmented, i.e., $\mathbf{\Phi_1}=[\mathbf{Q}\ \boldsymbol{e}_{j_1}]$ and
the new residual is computed, requiring the inversion of
\[
\mathbf{\Phi_1}^T\mathbf{\Phi_1}= \begin{bmatrix}
\mathbf{I}_m & \mathbf{Q}^T\boldsymbol{e}_{j_1}\\
\boldsymbol{e}_{j_1}^T\mathbf{Q} & 1
\end{bmatrix}.
\]
Taking into account that $\mathbf{I}_m$ is invertible and $\beta=1-||\mathbf{Q}^T\boldsymbol{e}_{j_1} ||_2^2 > 1/2$ (inequality \eqref{in:usual_bound_byhalf} for $|S|=1$) and using the \textit{Matrix Inversion Lemma} in block form, we obtain:
\[
(\mathbf{\Phi_1}^T\mathbf{\Phi_1})^{-1}= \begin{bmatrix}
\mathbf{I}_m + \mathbf{Q}^T\boldsymbol{e}_{j_1}\boldsymbol{e}_{j_1}^T\mathbf{Q}/ \beta & -\mathbf{Q}^T\boldsymbol{e}_{j_1}/ \beta\\
-\boldsymbol{e}_{j_1}^T\mathbf{Q}/ \beta & 1/ \beta
\end{bmatrix}.
\]
After a few lines of elementary algebra, we take
\begin{align*}
\mathbf{\Phi_1}&(\mathbf{\Phi_1}^T\mathbf{\Phi_1})^{-1}\mathbf{\Phi_1}^T = \mathbf{QQ}^T+\mathbf{QQ}^T\boldsymbol{e}_{j_1}\boldsymbol{e}_{j_1}^T\mathbf{QQ}^T/ \beta - \\
-&\boldsymbol{e}_{j_1}\boldsymbol{e}_{j_1}^T\mathbf{QQ}^T/ \beta
-\mathbf{QQ}^T\boldsymbol{e}_{j_1}\boldsymbol{e}_{j_1}^T/ \beta+\boldsymbol{e}_{j_1}\boldsymbol{e}_{j_1}^T/ \beta.
\end{align*}
Hence, the new residual $\boldsymbol{r}^{(1)}=\boldsymbol{y}- \mathbf{\Phi_1}(\mathbf{\Phi_1}^T\mathbf{\Phi_1})^{-1}\mathbf{\Phi_1}^T \boldsymbol{y}$, can be recast as
\begin{align}
\boldsymbol{r}^{(1)}=&(\mathbf{I}_n-\mathbf{QQ}^T - \mathbf{QQ}^T\boldsymbol{e}_{j_1}\boldsymbol{e}_{j_1}^T\mathbf{QQ}^T/ \beta + \boldsymbol{e}_{j_1}\boldsymbol{e}_{j_1}^T\mathbf{QQ}^T/ \beta \nonumber \\
&+\mathbf{QQ}^T\boldsymbol{e}_{j_1}\boldsymbol{e}_{j_1}^T/ \beta - \boldsymbol{e}_{j_1}\boldsymbol{e}_{j_1}^T \ \beta)\boldsymbol{u}_0.
\label{eq:residualform1}
\end{align}
Relation \eqref{eq:residualform1} could be simplified using the decomposition for the outlier vector, i.e., $\boldsymbol{u}_0=u_{j_1}\boldsymbol{e}_{j_1} + F_{S\setminus S_1}(\boldsymbol{u}_0)$, where the second part is the vector which has the same elements as $\boldsymbol{u}_0$ over the set ${S\setminus S_1}$, besides the $j_1-$th coordinate which is equal to zero. Obviously, this is a vector, $s-1$ sparse at most and its support is a subset of the support of $\boldsymbol{u}_0$.
Thus, we have:
\begin{align}
\boldsymbol{r}^{(1)}&=(\mathbf{I}_n-\mathbf{QQ}^T - \mathbf{QQ}^T\boldsymbol{e}_{j_1}\boldsymbol{e}_{j_1}^T\mathbf{QQ}^T/ \beta + \nonumber\\
&+ \boldsymbol{e}_{j_1}\boldsymbol{e}_{j_1}^T\mathbf{QQ}^T/ \beta) F_{S\setminus S_1}(\boldsymbol{u}_0)= \nonumber\\
&=\boldsymbol{v}_1 -\mathbf{QQ}^T\boldsymbol{v}_1,
\label{eq:residualform2}
\end{align}
where
$\boldsymbol{v}_1 =  F_{S\setminus S_1}(\boldsymbol{u}_0) + \gamma_1\boldsymbol{e}_{j_1}$,
$\gamma_1= \boldsymbol{e}_{j_1}^T\mathbf{QQ}^TF_{S\setminus S_1}(\boldsymbol{u}_0) / \beta.$

At this point, we should note that $\supp(\boldsymbol{v}_1)=\supp(\boldsymbol{u}_0)=S$, while $\gamma_1 \neq u_{j_1}$. Following a similar rational, for the next step, we impose $|r^{(1)}(i)|>|r^{(1)}(j)|$ for all $i\in S\setminus S_1$ and $j\in S^c$. Hence, using lower and upper bounds leads to
\begin{align}
|&r^{(1)}(i)| =|  \langle \boldsymbol{r}^{(1)}, \boldsymbol{e}_i \rangle |= | \langle \boldsymbol{v}_1 - \mathbf{QQ}^{T}\boldsymbol{v}_1, \boldsymbol{e}_i \rangle  |\geq  \nonumber \\
&\geq |u_i | - |\boldsymbol{e}_i^T \mathbf{QQ}^{T}\boldsymbol{v}_1 | \geq \min_{l\in S}|u_l| - \delta_s^2 ||\boldsymbol{v}_1||_2,
\label{eq:guarantee_bound_left2}
\end{align}
where we used that $\langle \boldsymbol{e}_{i_1}, \boldsymbol{e}_i \rangle=0$ for $i\in S\setminus S_1$, and
\begin{align}
|r^{(1)}(j)| &=|  \langle \boldsymbol{r}^{(1)}, \boldsymbol{e}_j \rangle |= | \langle \boldsymbol{v}_1 -\mathbf{QQ}^T\boldsymbol{v}_1, \boldsymbol{e}_j \rangle  |= \nonumber \\
&= |\boldsymbol{e}_j^T\mathbf{QQ}^{T}\boldsymbol{v}_1| \leq \delta_s^2 ||\boldsymbol{v}_1||_2,
\label{eq:guarantee_bound_right2}
\end{align}
where we exploited the relationship $\langle \boldsymbol{v}_1,\boldsymbol{e}_j \rangle=0$, for every $j\in S^c$, as well as lemma \ref{lem:vec_bound_fin}.

Imposing $\min_{l\in S}|u_l| - \delta_s^2 ||\boldsymbol{v}_1||_2>  \delta_s^2 ||\boldsymbol{v}_1||_2$, leads equivalently to
\begin{equation}
\delta_s<\sqrt{\frac{\min_{l\in S}|u_l|}{2||\boldsymbol{v}_1||_2}}.
\label{eq:secondbound}
\end{equation}
Although \eqref{eq:secondbound}, seems inadequate, we will show indeed that this is a condition which always holds true, provided \eqref{eq:delta_bound_definition} is satisfied. One needs to prove that $||\boldsymbol{u}_0 ||_2>||\boldsymbol{v}_1||_2$, which is equivalent to showing that $|\gamma_1|<|u_{j_1}|$, using the aforementioned decompositions of $\boldsymbol{u}_0,\ \boldsymbol{v}_1$ and the Pythagorean Theorem.
Thus, we have that
\begin{align}
|\gamma_1|&= \frac{| \langle   \boldsymbol{e}_{j_1},\mathbf{QQ}^T F_{S\setminus S_1}(\boldsymbol{u}_0)  \rangle |}{| \beta|}
\leq 2\delta_s^2 ||F_{S\setminus S_1}(\boldsymbol{u}_0)||_2  \nonumber \\
& < \min_{j \in S} |u_j|\leq |u_{j_1}|,
\label{eq:lastbound}
\end{align}
due to $\beta>1/2$, the definition of the principal angle \eqref{EQ:principal_angle_final}, inequality \eqref{eq:delta_bound_definition} and  $||F_{S\setminus S_1}(\boldsymbol{u}_0)||_2 < ||\boldsymbol{u}_0 ||_2$, for any non-empty set $S_1$.

At the k-step $S_k=\{j_1,j_2,...,j_k \}\subset S$ and
the matrix that corresponds to the set of active columns is $\mathbf{\Phi_k}=[\mathbf{Q}\ \mathbf{I}_{S_k}]$. Using again the \textit{Matrix Inversion Lemma} for the inversion of $\mathbf{\Phi}_k^T\mathbf{\Phi}_k$, the new residual is given as follows:
\begin{align}
\boldsymbol{r}^{(k)}&=(\mathbf{I}_n-\mathbf{QQ}^T - \mathbf{QQ}^T\mathbf{I}_{S_k}\mathbf{M}_k^{-1}\mathbf{I}_{S_k}^T\mathbf{QQ}^T + \nonumber\\
&+ \mathbf{I}_{S_k}\mathbf{M}_k^{-1}\mathbf{I}_{S_k}^T\mathbf{QQ}^T)F_{S\setminus S_k}(\boldsymbol{u}_0) \nonumber\\
&=\boldsymbol{v}_k -\mathbf{QQ}^T\boldsymbol{v}_k,
\label{eq:residualformk}
\end{align}
where we used the identities
\begin{align}
\mathbf{M}_k &= \mathbf{I}_k -\mathbf{I}_{S_k}^T\mathbf{QQ}^T\mathbf{I}_{S_k}, \label{eq:Mk_matrix} \\
\boldsymbol{u}_0 &= F_{S_k}(\boldsymbol{u}_0)+ F_{S\setminus S_k}(\boldsymbol{u}_0), \label{eq:dec_u} \\
\boldsymbol{v}_k &= F_{S\setminus S_k}(\boldsymbol{u}_0)+  \mathbf{I}_{S_k}\mathbf{M}_k^{-1}\mathbf{I}_{S_k}^T\mathbf{QQ}^TF_{S\setminus S_k}(\boldsymbol{u}_0), \label{eq:dec_uk}
\end{align}
It is not hard to verify that $\supp(\boldsymbol{v}_k)=\supp(\boldsymbol{u}_0)=S$ still holds true. For a correct outlier index selection from the set $S$, at $k+1$ step, one needs to impose $|r^{(k)}(i)|>|r^{(k)}(j)|$ for all $i\in S\setminus S_k$ and $j\in S^c$. Using lower and upper bounds on the inner products, one obtains relations similar to \eqref{eq:guarantee_bound_left2}, \eqref{eq:guarantee_bound_right2} with $\boldsymbol{v}_k$ instead of $\boldsymbol{v}_1$, which leads to
\begin{equation}
\delta_s<\sqrt{\frac{\min_{l\in S}|u_l|}{2||\boldsymbol{v}_k||_2}}.
\label{eq:finalbound}
\end{equation}
The proof ends, by showing that the last bound is looser than that of inequality \eqref{eq:delta_bound_definition}, simply by proving that $||\boldsymbol{v}_k||_2< || \boldsymbol{u}_0||_2$ for all $k=1,2,...,s-1$.

Using the decompositions of these vectors \eqref{eq:dec_u}, \eqref{eq:dec_uk} and the Pythagorean Theorem, it suffices to show that
$|| \mathbf{M}_k^{-1}\mathbf{I}_{S_k}^T\mathbf{QQ}^T F_{S\setminus S_k}(\boldsymbol{u}_0)||_2 < || F_{S_k}(\boldsymbol{u}_0)||_2$, which follows from the fact that
\begin{align}
|| \mathbf{M}_k^{-1}&\mathbf{I}_{S_k}^T\mathbf{QQ}^T F_{S\setminus S_k}(\boldsymbol{u}_0)||_2 \nonumber\\
&\leq || \mathbf{M}_k^{-1}||_2 || \mathbf{I}_{S_k}^T\mathbf{QQ}^T F_{S\setminus S_k}(\boldsymbol{u}_0)||_2 \nonumber\\
&<  \min_{l \in S}|u_l|\leq || F_{S_k}(\boldsymbol{u}_0)||_2,
\end{align}
where we employed the sub-multiplicative property of the matrix 2-norm, inequality \eqref{eq:norm<2}, Lemma \ref{lem:vec_bound_fin} and \eqref{eq:delta_bound_definition}.

Finally, at step $k+1=s$ it is guaranteed that the correct support is recovered and thus the linear subspace, onto which the measurement vector $\boldsymbol{y}$ lies, is build. In turn, this results to a Least Squares solution for GARD of zero error.
\end{proof}

\section{Proof of Theorem \ref{theor:exact_support}}
\label{appendix:B}
Since, theorem \ref{theor:exact_support} is the generalization of \ref{theor:exact_rec}, some intermediate results regarding the proof presented in Appendix \ref{appendix:A}, will also be used here. On the other hand, we will try to avoid the technical parts with shared similarities. 
\begin{proof}
Due to the existence and uniqueness of the \textit{QR} decomposition, the analysis is based on equation \eqref{eq:probdefmat_qr_noise}.\\
Since initially GARD performs a Least Squares step, where the columns that participate in the representation are only those of matrix $\mathbf{X}$, the residual is also $\boldsymbol{r}^{(0)}=(\mathbf{I}_n - \mathbf{QQ}^T)\boldsymbol{y}$. Thus, taking into account \eqref{eq:probdefmat_qr_noise}, we have the following expression for the initial residual:
\[
\boldsymbol{r}^{(0)}= \boldsymbol{u}_0 + \boldsymbol{\eta} - 
\mathbf{QQ}^T\boldsymbol{u}_0 - \mathbf{QQ}^T\boldsymbol{\eta},
\]
where the extra terms are due to the noise vector $\boldsymbol{\eta}$.\\
Once again, we should impose \eqref{eq:guarantee}, according to \eqref{eq:guarantee_bound_left} and \eqref{eq:guarantee_bound_right}. Also, recall on Theorem \ref{theor:exact_rec}, suggesting\footnote{In the noiseless case, $\sqrt{2}/2$ was the upper bound for $c>\delta_s$, achieved only for $1$-sparse outlier vectors. Thus, if $\delta_s$ exceeds this limit, GARD has little chance in recovering the correct support, even in the presence of outlier noise only, let alone as inlier noise coexists.} that $\delta_s<c\leq \sqrt{2}/2.$ Thus, we have:   
\begin{align*}
|r^{(0)}(i)| &\geq |u_i | - |\langle \mathbf{QQ}^{T}\boldsymbol{u}_0, \boldsymbol{e}_i \rangle | - | \langle \boldsymbol{\eta}, \boldsymbol{e}_i \rangle | - 
|\langle \mathbf{QQ}^{T}\boldsymbol{\eta}, \boldsymbol{e}_i \rangle | \\ 
&\geq \min_{l\in S}|u_l| - \delta_s^2 ||\boldsymbol{u}_0||_2 -\epsilon_0 - \epsilon_0\delta_s \nonumber \\
& > \min_{l\in S}|u_l| - \delta_s^2 ||\boldsymbol{u}_0||_2 -\epsilon_0 - \frac{\epsilon_0}{\sqrt{2}}\\
& > \min_{l\in S}|u_l| - \delta_s^2 ||\boldsymbol{u}_0||_2 -\epsilon_0 - \epsilon_0 \sqrt{\frac{3}{2}}
\end{align*}
and
\begin{align*}
|r^{(0)}(j)| & \leq \epsilon_0 + \delta_s^2 ||\boldsymbol{u}_0||_2+ \epsilon_0\delta_s\\
&< \epsilon_0 + \delta_s^2 ||\boldsymbol{u}_0||_2+  \frac{\epsilon_0}{\sqrt{2}}\\
&< \epsilon_0 + \delta_s^2 ||\boldsymbol{u}_0||_2+  \epsilon_0 \sqrt{\frac{3}{2}},
\end{align*}
for $i \in S$ and $j \in S^c$, respectively. Thus, inequality \eqref{eq:delta_bound_definition_epsilon} follows for the initial step. in the following, we proceed with the general selection step at $k+1$, as the first one is omitted, since it could be viewed as a special case of the general $k+1$ step. In the proof of Theorem \ref{theor:exact_rec}, it was presented for comprehension reasons solely. It should also be noted, that the matrices augmented and inverted at each step, are those presented in the proof of Theorem \ref{theor:exact_rec}. However, this is not the case for the solution and the residual, which is in our greatest interest.\\
The condition in \eqref{eq:delta_bound_definition_epsilon}, guarantees, that at each selection step the support of our sparse estimate is a subset of the sparse outlier vector $\boldsymbol{u}_0$, i.e., $S_k \subset S$ and the matrix that corresponds to the set of active columns is $\mathbf{\Phi}_k = [\mathbf{Q}\ \mathbf{I}_{S_k} ]$. Employing familiar techniques, we have an expression for the residual at the $k$ step:
\begin{equation}
\label{eq:rk_sup_rec}
\boldsymbol{r}^{(k)}= \boldsymbol{v}_k + \boldsymbol{\eta}_k - 
\mathbf{QQ}^T\boldsymbol{v}_k - \mathbf{QQ}^T\boldsymbol{\eta}_k,
\end{equation}
where $\boldsymbol{v}_k$ is the vector defined in \eqref{eq:dec_uk} and 
\begin{equation}
\label{eq:eta_k}
\boldsymbol{\eta}_k = F_{J\setminus S_k}(\boldsymbol{\eta})+  \mathbf{I}_{S_k}\mathbf{M}_k^{-1}\mathbf{I}_{S_k}^T\mathbf{QQ}^TF_{J\setminus S_k}(\boldsymbol{\eta}).
\end{equation}
In \eqref{eq:eta_k}, it is clear that the only differences between $\boldsymbol{\eta}_k$ and $\boldsymbol{\eta}$,
take place at the elements indexed $j_k \in S_k$, i.e, indices that GARD has selected as outliers. Moreover, for $J'=J \setminus S_k$ holds $J'\cap S_k= \emptyset,$ i.e., the vector is decomposed into two disjoint subsets. At this point, prior to completing the proof, it is required to establish appropriate bounds for the inner products $\left| \langle \boldsymbol{e}_i, \mathbf{QQ}^T \boldsymbol{\eta}_k   \rangle \right|$ and $\left|\langle \boldsymbol{e}_i,\boldsymbol{\eta}_k \rangle \right|$. Due to the Pythagorean Theorem, \eqref{eq:norm_bound} and \eqref{eq:norm<2}
\begin{align*}
\left\| \boldsymbol{\eta}_k \right\|_2^2 &= \left\| F_{J\setminus S_k}(\boldsymbol{\eta}) \right\|_2^2 + \left\| \mathbf{M}_k^{-1} \mathbf{I}_{S_k}^T \mathbf{QQ}^T F_{J \setminus S_k}(\boldsymbol{\eta}) \right\|_2^2  \\
& \leq \epsilon_0^2+ 2\epsilon_0^2=3\epsilon_0^2.
\end{align*} 
Hence,
\[
\left| \boldsymbol{e}_i^T\mathbf{QQ}^T \boldsymbol{\eta}_k \right| \leq \delta_s \left\| \boldsymbol{\eta}_k \right\|_2 \leq \delta_s \sqrt{3} \epsilon_0 < \epsilon_0 \sqrt{\frac{3}{2}},
\]
where we have also used the maximum bound, that $\delta_s<\sqrt{2}/2$. Also, for all $i \in J \setminus S_k$, holds $\left|\langle \boldsymbol{e}_i,\boldsymbol{\eta}_k \rangle \right| \leq \left|\langle \boldsymbol{e}_i,F_{J\setminus S_k}(\boldsymbol{\eta}_k) \rangle \right| \leq \epsilon_0$.
Thus, adopting bounds to the absolute value of the inner products, we have
\begin{align*}
|r^{(k)}(i)| &\geq |u_i | - |\langle \mathbf{QQ}^{T}\boldsymbol{v}_k, \boldsymbol{e}_i \rangle |- \\ & \quad - | \langle \boldsymbol{\eta}_k, \boldsymbol{e}_i \rangle | - 
|\langle \mathbf{QQ}^{T}\boldsymbol{\eta}_k, \boldsymbol{e}_i \rangle | \geq \\ 
&\geq \min_{l\in S}|u_l| - \delta_s^2 ||\boldsymbol{v}_k||_2 -\epsilon_0 - \epsilon_0 \sqrt{\frac{3}{2}}
\end{align*}
and
\[
|r^{(k)}(j)| \leq  \delta_s^2 ||\boldsymbol{v}_k||_2+\epsilon_0 + \epsilon_0 \sqrt{\frac{3}{2}},
\]
for $i \in S \setminus S_k$ and $j \in S^c$, respectively. 
Thus, imposing $|r^{(k)}(i)|>|r^{(k)}(j)|$, leads to 
\[
\delta_s< \sqrt{\frac{\min_{l\in S}|u_l|-(2+\sqrt{6})\epsilon_0}{2 ||\boldsymbol{v}_k||_2}},
\]
which is satisfied, suppose \eqref{eq:delta_bound_definition_epsilon} holds true. This  holds true, due to the fact that $\left\| \boldsymbol{v}_k \right\|_2< \left\| \boldsymbol{u}_o \right\|_2$ for all $k=1,2,...,s-1$ (read at the end of Appendix \ref{appendix:A} for the proof).
\end{proof}



\bibliographystyle{IEEEtran}
\bibliography{Geo_Library}

\end{document}